\newtheorem{prop}{Proposition}
\newtheorem{lem}{Lemma}
\theoremstyle{remark}
\newtheorem*{rem*}{Remark}
\crefname{step}{step}{steps}
\newcommand\qmaddress{Quantum Motion, 9 Sterling Way, London N7 9HJ, United Kingdom}
\newcommand\oxfordaddress{Department of Materials, University of Oxford, Parks Road, Oxford OX1 3PH, United Kingdom}
\begin{document}

\title{Actis: A Strictly Local Union--Find Decoder}

\author{Tim Chan}
\email{timothy.chan@materials.ox.ac.uk}
\affiliation{\oxfordaddress}
\orcid{0000-0001-6187-7402}

\author{Simon C. Benjamin}
\email{simon.benjamin@materials.ox.ac.uk}
\affiliation{\oxfordaddress}
\affiliation{\qmaddress}
\orcid{0000-0002-7766-5348}

\begin{abstract}
Fault-tolerant quantum computing requires classical hardware
to perform the decoding necessary for error correction.
The Union--Find decoder is one of the best candidates for this.
It has remarkably organic characteristics,
involving the growth and merger of data structures
through nearest-neighbour steps;
this naturally suggests the possibility of its realisation
using a lattice of simple processors
with nearest-neighbour links.
In this way
the computational load can be distributed with near-ideal parallelism.
Here we show for the first time that
this strict
(rather than partial)
locality is practical,
with a
	worst-case runtime $\mathcal O(d^3)$
	and mean runtime subquadratic in the surface code distance $d$.
A novel parity-calculation scheme is employed
which can simplify previously proposed architectures,
and our approach is optimised for circuit-level noise.
We compare our local realisation
with one augmented by long-range links;
while the latter is of course faster,
we note that local asynchronous logic
could negate the difference.
\end{abstract}

\maketitle
\section{Introduction}
\label{sec:introduction}
In the fault-tolerant era of quantum computing,
the quantum hardware must be supported by classical decoders
that infer the nature of errors `on the fly' from measurements.
It is challenging to find decoder implementations
that are sufficiently
	fast,
	compact,
	and (ideally) with low power requirements.
We focus on decoders for the \emph{surface code}
\cite{Dennis2002,Fowler2012a,Litinski2019}:
one of the most promising error-correcting codes
needed for fault tolerance
due to its simplicity.
Many decoders have been developed
such as
	\emph{minimum-weight perfect matching}
		(MWPM)
		\cite{Edmonds1965,Fowler2012},
	\emph{renormalisation group}
		\cite{Duclos-Cianci2010,Duclos-Cianci2010a},
	\emph{Markov chain Monte Carlo}
		\cite{Wootton2012},
	and various using
		\emph{belief propagation}
			\cite{Criger2018,Higgott2023b,Higgott2023a,Kuo2022},
		\emph{neural networks}
			\cite{Sheth2020,Overwater2022},
		or a \emph{hierarchical design}
			\cite{Delfosse2020a,Meinerz2022,Ravi2023_quantum_bibstyle,Smith2023}.

One simple and fast
decoder is the Union--Find decoder (UF)
designed by Delfosse et al.~\cite{Delfosse2020,Delfosse2021}.
It has a relatively high accuracy
and a mean runtime slightly higher than cubic in $d$
\cite[\S2.3]{Liyanage2023}.
Liyanage et al.~\cite{Liyanage2023} recently proposed \emph{Helios},
the FPGA
(field-programmable gate array)
implementation of
an almost-local version of UF.
By \emph{local} we mean runnable on a grid of identical nodes
(classical processors),
each communicating only with their nearest neighbours.
They report an improved mean runtime: sublinear in $d$.

\begin{figure}
	\centering
	\includegraphics[width=0.5\textwidth]{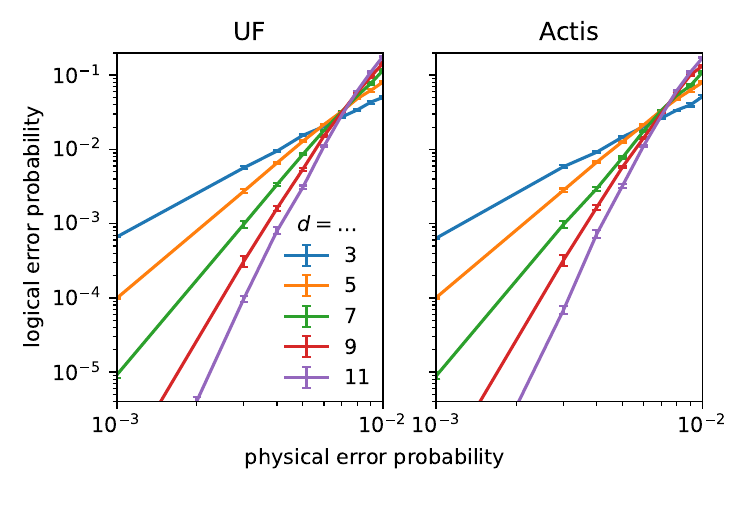}
	\caption{Threshold plots for original UF and Actis.
	We use the circuit-level noise model
	detailed in \cref{sec:circuit-level_noise_appendix}.
	These numerics are consistent with the assertion
	in the main text
	that both decoders have the same accuracy.
	Each datapoint is the mean of \numrange{1e4}{1.5e7}
	samples of decoding cycle;
	errorbars show standard error.
	The threshold is the physical error probability at which
	the lines cross,
	\num{\approx 7.5e-3} for both plots.}
	\label{fig:thresholds}
\end{figure}

\begin{figure*}
	\centering
	\includegraphics[width=\textwidth]{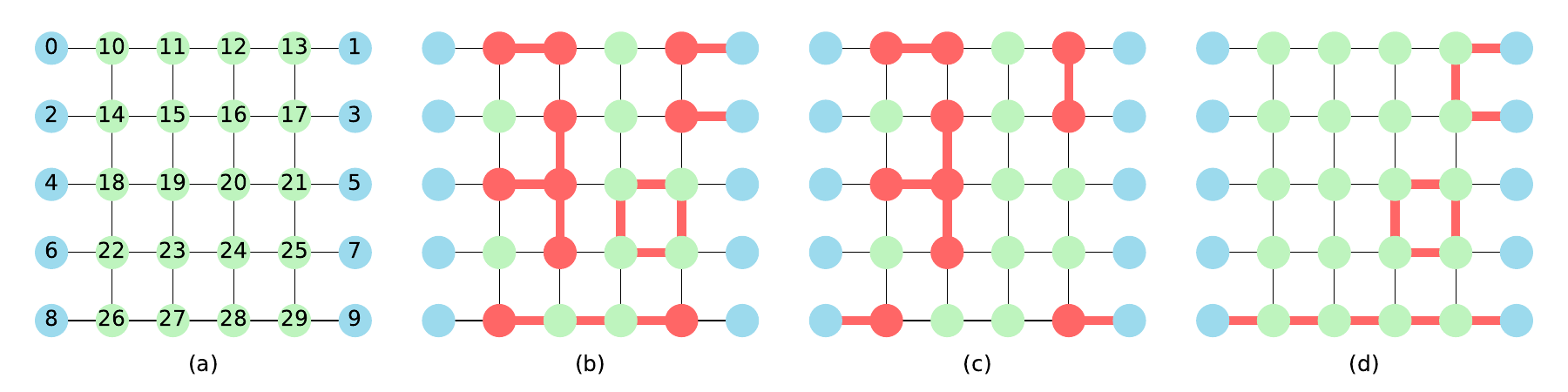}
	\caption{
	(a) The graph representing a distance-5 surface code.
	Detectors are green (nodes 10--29);
	boundary nodes, blue (nodes 0--9).
	(b) An example of an error $\mathbb E$ is the set of thick red edges
	whose corresponding syndrome $\mathbb S$ is the set of red nodes.
	Note a path in $\mathbb E$
	(like that from node 26 to 29)
	will make a defect only at each endpoint;
	a cycle will make no defects.
	The decoder sees only $\mathbb S$, not $\mathbb E$.
	(c) A correction $\mathbb C$ for $\mathbb E$ is the set of thick red edges.
	The syndrome made by $\mathbb C$
	is always the same as that of $\mathbb E$.
	(d) The leftover $\mathbb L$ after combining $\mathbb E$ and $\mathbb C$.
	This example comprises a cycle and two paths:
		one between nodes on the same boundary
		and one between opposite boundaries.
	The latter represents a logical bitflip.}
	\label{fig:surface_code}
\end{figure*}

Our main results
extend this literature,
especially the ideas proposed for Helios,
as we introduce a paradigm which
reduces
	the memory requirements of each node,
	the number and size of messages passed around,
	the grid architecture complexity,
	and the total number of algorithm stages.
This enhancement,
used in our version of almost-local UF called \emph{Macar},
is a simple scheme inspired by anyon annihilation
\cite{Kitaev2003}
to calculate parities.
We also design \emph{Actis}:
a \emph{strictly} local version of UF,
the first of its kind,
which is even more practical
due to the lack of long-range links,
and whose mean runtime is subquadratic in $d$
for error probabilities below threshold.
While Helios was built only for phenomenological noise,
we design our algorithms for circuit-level noise.
Actis further takes advantage of the pre-existing decoder structure
for this noise to minimise its physical overhead.
Having compared Macar and Actis
we note that the use of a fast communication relay
implemented with asynchronous logic
constitutes a third version of UF
which rivals the speed of Macar
whilst maintaining strict locality.

As well as having improved mean and worst-case runtime scalings,
all three of our versions
are exact implementations of,
hence just as accurate as,
original UF.
We confirm the latter in \cref{fig:thresholds},
noting our versions behave identically in their output.

The paper proceeds as follows.
In \cref{sec:background}
we cover the prerequisite background theory.
In \cref{sec:almost_local_uf}
we discuss almost-local UF as introduced in the Helios scheme
and our streamlined design, Macar,
before describing our strictly local Actis
in \cref{sec:actis}.
We evaluate runtime for Macar and Actis
in \cref{sec:runtime_analysis}
and discuss how Actis can be sped up using asynchronous logic
in \cref{sec:asynchronous_logic}.
\Cref{sec:conclusion} concludes.

Our emulation code is on GitHub at \cite{Chan2023a_quantum_bibstyle}
and the raw data for the plots in this paper is available at \cite{Chan2023b_quantum_bibstyle}.

\section{Background}
\label{sec:background}
\Cref{sec:surface_code_and_decoding} covers the surface code
with a focus on decoding.
We approach this standard material using graph theory
as it facilitates a natural explanation of UF
presented in \cref{sec:uf}.

\subsection{Surface Code and Decoding}
\label{sec:surface_code_and_decoding}
As its name suggests,
the surface code arranges the physical qubits on a 2D grid.
It encodes one logical qubit and corrects for both bit- and phaseflip errors
(the ability to correct these two discrete errors
allows us to correct any qubit error \cite[\S 10.3.1]{Nielsen2010}).
Here, we describe how to correct bitflip errors;
phaseflip errors are corrected analogously.
We explain the decoding cycle
under three noise models of increasing realism:
\emph{code capacity},
used in \cref{sec:code_capacity},
assumes perfect measurements;
\emph{phenomenological noise},
described in in \cref{sec:phenomenological_noise},
generalises to faulty measurements.
\Cref{sec:circuit-level_noise} discusses
\emph{circuit-level depolarising noise}.

\subsubsection{Code Capacity}
\label{sec:code_capacity}
\begin{figure*}
	\centering
	\includegraphics[width=\textwidth]{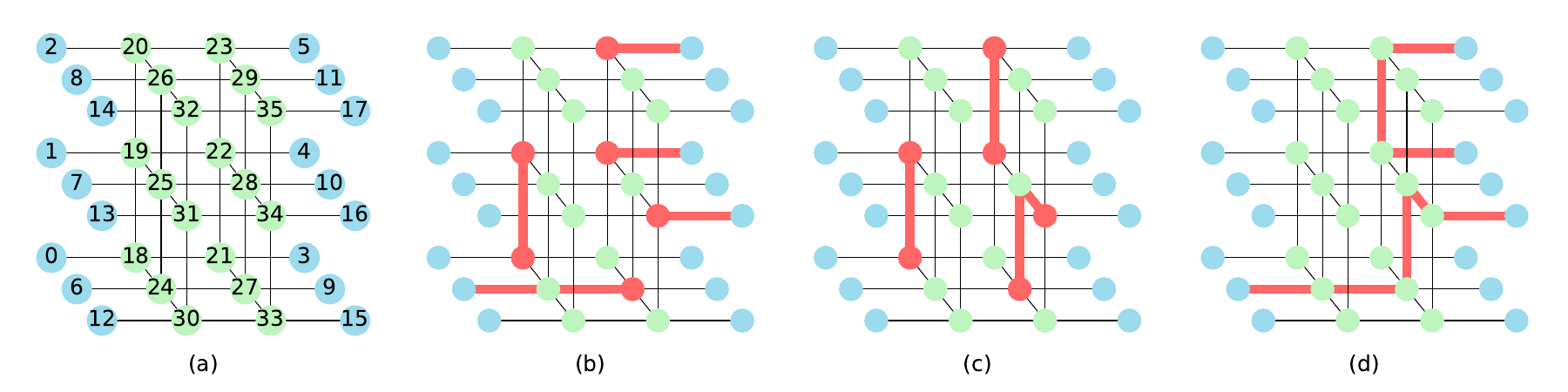}
	\caption{The 3D generalisation of \cref{fig:surface_code}
	to account for faulty measurements.
	(a) Instead of distance-5,
	we show a distance-3 code
	for visual clarity.
	The time axis points upward.
	(b) The error $\mathbb E$ can now include vertical `timelike' edges
	like that from node 18 to 19.
	The syndrome $\mathbb S$ is the set of red nodes
	throughout the whole lattice.
	(c) A correction $\mathbb C$ for $\mathbb E$.
	(d) The leftover $\mathbb L$ comprises
		one path between the same boundary
		and one between opposite boundaries.
	The latter represents a logical bitflip.}
	\label{fig:3D_surface_code}
\end{figure*}

The surface code is based on the graph $G =(V, E)$
in \cref{fig:surface_code}(a)
which comprises a set $V =V_\d \cup V_\b$ of nodes
arranged on a $d \times (d+1)$ grid
and a set $E$ of edges.
Members of $V_\d$ are called detectors;
$V_\b$, boundary nodes.
Each edge corresponds to a set of two nodes from $V$.
Physically,
	each detector represents an ancilla qubit used for measurement
	while each edge represents a data qubit.
Boundary nodes represent nothing physical
but exist just to allow every data qubit to be represented by an edge.
We will treat the decoding problem abstractly in terms of nodes and edges.
In a decoding cycle:
\begin{enumerate}
	\item \textbf{Noise corrupts our system}
	\label[step]{item:noise_corrupts_our_system}
	Each edge bitflips with some probability $p$
	called the \emph{physical error probability}
	i.e.\ is assigned bit value 1 with probability $p$,
	and bit value 0 otherwise.
	The \emph{error} $\mathbb E \subseteq E$ is the set of bitflipped edges. 
	\item \textbf{Ancilla qubits flag the noise}
	\label[step]{item:ancilla_qubits_flag_the_noise}
	Every detector records
	the parity of the edges in $\mathbb E$ which are incident to it.
	If it records odd parity, it is a \emph{defect}.
	The \emph{syndrome} $\mathbb S \subseteq V_\d$
	is the set of defects.
	\Cref{fig:surface_code}(b) shows an error and its syndrome.
	Mathematically, $\mathbb S =\sigma(\mathbb E)$ where
	\begin{equation}\label{eq:syndrome_definition}
		\sigma(\mathbb E)
		:=V_\d \cap \bigsd_{e \in \mathbb E} e
	\end{equation}
	and $\bigsd_{e \in \mathbb E} e :=f \sd \dots \sd g$
	for $\mathbb E =`{f, \dots, g}$
	where $\sd$ denotes symmetric difference.
	\item \textbf{The classical decoder acts}
	Given $\mathbb S$,
	the decoder must output a \emph{correction} $\mathbb C \subseteq E$
	which, if treated as an error, would make the same syndrome $\mathbb S$.
	In other words, $\mathbb C$ must satisfy $\sigma(\mathbb C) =\mathbb S$.
	\Cref{fig:surface_code}(c) shows a correction.
	\item \textbf{Success or failure}
	\label[step]{item:success_or_failure}
	The \emph{leftover} $\mathbb{L :=E \sd C}$
	is then guaranteed to satisfy the following lemma
	(for completeness, a proof is given in
	\cref{sec:proof_of_lem:leftover}).
	\begin{lem}\label{lem:leftover}
	$\mathbb L$ comprises only cycles
	or paths between distinct boundary nodes.
	\end{lem}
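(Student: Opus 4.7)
The plan is to exploit the $\mathbb{Z}/2$-linearity of $\sigma$ with respect to symmetric difference. First, I would prove the helper identity
\begin{equation}
\sigma(A \sd B) = \sigma(A) \sd \sigma(B)
\end{equation}
for any $A, B \subseteq E$, by expanding the definition $\sigma(X) = V_\text{bulk} \cap \bigsd_{e \in X} e$ and noting that any edge in $A \cap B$ contributes twice to the joint symmetric difference and so cancels, matching its absence from $A \sd B$. Applied to $\mathbb L = \mathbb E \sd \mathbb C$, and using $\sigma(\mathbb E) = \sigma(\mathbb C) = \mathbb S$, this gives $\sigma(\mathbb L) = \mathbb S \sd \mathbb S = \emptyset$. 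Equivalently, every bulk node has even degree in the subgraph $(V, \mathbb L)$.

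I would then invoke the classical fact that any finite graph admits an edge-disjoint decomposition into simple cycles and simple paths whose endpoints are exactly its odd-degree vertices. Since every bulk vertex is even-degree in $(V, \mathbb L)$, each path's endpoints must lie in $V_\text{boundary}$; and by definition a simple path has two distinct endpoints. This is precisely the statement of the lemma.

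The main obstacle is making the decomposition argument rigorous, which I would handle by induction on $|\mathbb L|$. If $\mathbb L$ has any odd-degree vertex $v$, it is necessarily a boundary node; starting from $v$ and walking along unused edges, at every bulk vertex entered an unused outgoing edge exists (an odd number of its even total have been consumed), so the walk cannot halt at a bulk node and must eventually reach another boundary node. Shortcutting any repeated vertex (by discarding the loop between its two visits) yields a simple path. If instead every vertex in some nontrivial component is even-degree, the same walking argument from an arbitrary vertex in that component, again with shortcutting, extracts a simple cycle. Removing the extracted edge set preserves even parity at every bulk node, so the inductive step applies, completing the proof.
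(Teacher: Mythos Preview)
Your proof is correct and follows essentially the same route as the paper: establish $\sigma(A \sd B) = \sigma(A) \sd \sigma(B)$, conclude $\sigma(\mathbb L) = \varnothing$ so every bulk node has even degree in $\mathbb L$, and read off the decomposition into cycles and boundary-to-boundary paths. The paper is terser on the final step---it simply asserts the decomposition and obtains distinctness of the path endpoints from the fact that each boundary node has degree~1 in $G$---whereas you supply an explicit inductive walking argument and invoke the definition of a simple path; both are valid and the underlying strategy is the same.
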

	The question of whether the decoder has been successful
	depends on whether a logical error has resulted from the correction.
	Any path between opposite boundaries represents a logical bitflip.
	Hence, if $\mathbb L$ has an odd number of such paths
	the logical qubit has picked up a logical error.
	\Cref{fig:surface_code}(d) shows a leftover with one such path.
\end{enumerate}
The aim of the decoder is to output a correction
which is least likely to lead to a logical error.
\cref{fig:thresholds} shows examples of how,
for a surface code decoder,
logical error probability depends on both
	$d$ and $p$.
Each decoder has a different \emph{threshold}
i.e.\ physical error probability $p_\th$
below (above) which
increasing $d$
decreases (increases) the logical error probability.
The further below threshold a decoder operates,
the more effective it is.
In the case of UF,
operating at its threshold is practically useless;
significant benefits of error correction are reaped
only when $p$ is several times below threshold
i.e.\ $p \ll p_\th$.

\begin{figure*}
	\centering
	\includegraphics[width=\textwidth]{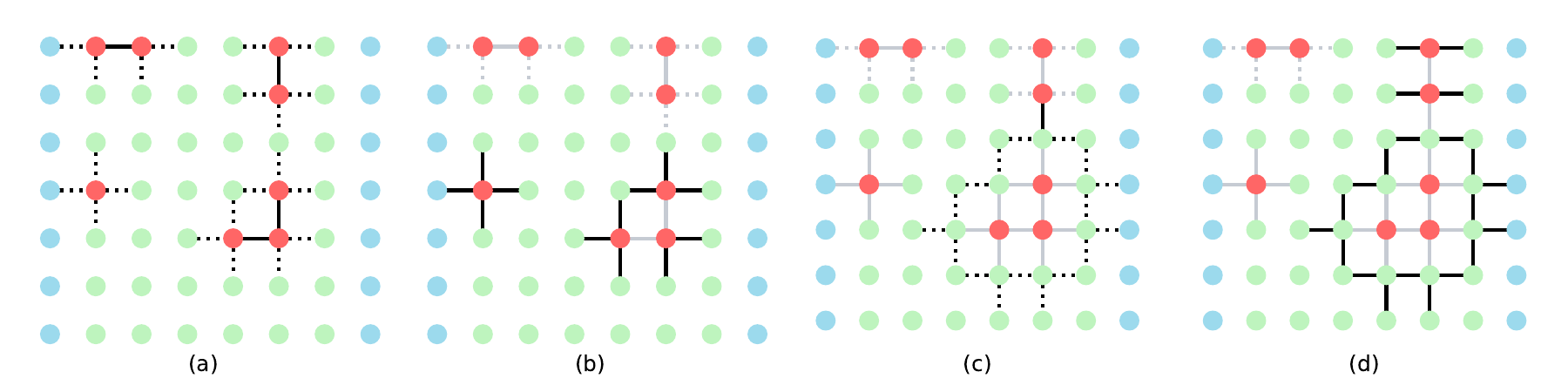}
	\caption{A syndrome validation example which needs four growth rounds.
	Ungrown edges are invisible;
	half-grown, dotted;
	fully grown, solid.
	Edges which have grown since the last round are black;
	else, grey.
	(a) Each defect grows its incident edges by $\frac12$.
	Note some edges have grown by $\frac12$ twice, so are fully grown.
	(b) The two active clusters grow again.
	The west one touches a boundary hence inactivates.
	(c) The remaining active cluster grows
	and touches hence merges with the inactive cluster north of it.
	(d) The resultant active cluster
		grows
		and touches the boundary hence inactivates.}
	\label{fig:syndrome_validation}
\end{figure*}
\subsubsection{Phenomenological Noise}
\label{sec:phenomenological_noise}
In reality,
ancilla qubit measurement is faulty
i.e.\ in \cref{item:ancilla_qubits_flag_the_noise}
each detector records the \emph{wrong} parity with some probability $q$.
The result is that the syndrome is subject to some noise:
$\mathbb S =\sigma(\mathbb E) \sd \mathbb F$,
where $\mathbb F \subseteq V_\d$
is the set of nodes which record the wrong parity for that measurement round.
Note boundary nodes cannot be in $\mathbb F$
as they do not represent ancilla qubits which are measured.

The usual way to account for faulty measurements is
to measure each ancilla qubit
	not once
	but $\tau+1$ times per decoding cycle,
where $\tau >0$
(intuitively,
the higher $q$ is,
the higher we should make $\tau$).
We then have
	not one 2D sheet of syndrome data as in \cref{fig:surface_code}(b),
	but $\tau+1$ of them,
	which we label $`{\mathbb S_0, \dots, \mathbb S_\tau}$.
In general there is a new $\mathbb E$ and $\mathbb F$ for each sheet,
but the data is cumulative in $\mathbb E$ i.e.
\begin{equation}\label{eq:2D_sheet}
\mathbb S_t =`\big[\bigsd_{u=0}^t \sigma(\mathbb E_u)] \sd \mathbb F_t
\qquad \forall t \in 0..\tau
\end{equation}
so to localise each $\mathbb E_t$
we take the difference between consecutive sheets:
\begin{align}
\upDelta \mathbb S_t &:= \mathbb S_t \sd \mathbb S_{t-1}
\qquad \forall t \in 1..\tau \notag\\
&\overset{\eqref{eq:2D_sheet}}=
\sigma(\mathbb E_t) \sd \mathbb F_t \sd \mathbb F_{t-1}.
\label{eq:2D_difference_sheet}
\end{align}
We then stack these 2D `difference sheets'
$`{\upDelta \mathbb S_1, \dots, \upDelta \mathbb S_\tau}$ to make
a 3D simple cubic lattice as in \cref{fig:3D_surface_code}(a).
Sheets are stacked such that
$\upDelta \mathbb S_1$ is on the bottom,
$\upDelta \mathbb S_2$ is the next one above,
etc.
This way,
sheets which derive from later measurements are higher up the stack
so we can interpret the upward direction as going forward in time
(hence the subscript $t$).

Each detector no longer represents one ancilla qubit;
rather,
the difference between two consecutive measurements of that ancilla qubit
at a given point in time.
Each horizontal edge no longer represents one data qubit
but a possible time at which that data qubit could bitflip.
Vertical edges are a new addition:
each one represents one possible faulty measurement.
Specifically,
each edge between
$\upDelta \mathbb S_t$ and
$\upDelta \mathbb S_{t+1}$
corresponds to a possible member of $\mathbb F_t$.
This makes sense
e.g.\ a member of
$\mathbb F_1$ effects a change in both
$\upDelta \mathbb S_1$ and
$\upDelta \mathbb S_2$
as per \cref{eq:2D_difference_sheet}.

If we redefine $G$ as this 3D simple cubic lattice,
all steps of the decoding cycle are the same as in
\cref{sec:code_capacity}
with the modification in \cref{item:noise_corrupts_our_system} that
	each \emph{horizontal} edge bitflips with probability $p$;
	each \emph{vertical} edge, $q$.
Detectors still always record the \emph{correct} parity
as now faulty measurements are modelled by vertical edges.
\Cref{fig:3D_surface_code}(b--d) shows a decoding cycle example.

In our emulations we assume the \emph{batch decoding scheme}
i.e.\ we set $\mathbb F_0 =\mathbb F_\tau =\varnothing$,
which explains the absence of vertical edges
	below $\upDelta \mathbb S_1$ and
	above $\upDelta \mathbb S_\tau$
in \cref{fig:3D_surface_code}.
This simplification allows us to
determine the success of the decoder
for a batch of $\tau+1$ measurement rounds in isolation.
In contrast,
continuously decoding a constant stream of syndrome data
from an indefinite number of measurement rounds
is the topic of
\emph{stream decoding}.
We do not treat this in our paper
as methods already exist
allowing most batch decoders,
including our versions of UF,
to be used as stream decoders
\cite{Dennis2002,Tan2022,Skoric2023}.

\subsubsection{Circuit-Level Noise}
\label{sec:circuit-level_noise}
Yet a more realistic noise model is
the circuit-level depolarising model,
which considers the quantum circuit
performed during a measurement round
and a variety of possible faults it can make.
Details of this are provided in \cref{sec:circuit-level_noise_appendix}
but the end result is the addition of certain diagonal edges in $G$,
as \cref{fig:SS_tree}(a) shows.
The bitflip probability for each edge
is a function of some characteristic
physical error probability $p$;
this function potentially varies for each edge.

Circuit-level noise depends on
parameters not standardised in literature
so it is difficult to compare thresholds under this model.
This is not the case for
	code capacity
	and phenomenological noise
	(standardised by setting $q =p$ and $\tau =d$),
under which we numerically obtain thresholds for UF of
	\num{\approx 9.8e-2}
	and \num{\approx 2.6e-2}
respectively,
consistent with previous results \cite[\S5.4]{Hu2020a}.

MWPM is a popular decoder which
	finds a most likely error for the syndrome,
	and outputs that as the correction
	i.e.\ $\mathbb C =\arg\max`{\pr(\mathbb E): \sigma(\mathbb E) =\mathbb S}$.
For small $p$ under code capacity or standardised phenomenological noise,
this amounts to finding a correction whose size $|\mathbb C|$ is minimal.
While accurate,
MWPM has a worst-case runtime
$\mathcal O(N^3 \lg N)$ \cite[\S 3]{Higgott2022a}
where $N :=|V| =\mathcal O(d^2)$ for code capacity
and $\mathcal O(d^3)$ for the other noise models.
UF approximates MWPM \cite{Wu2022}
with an improved worst-case runtime $\mathcal O`\big(N \alpha(N))$
where $\alpha(N) \le 3$ for all practical $d$.
In \cref{sec:comparison_with_mwpm}
we compare the accuracy of MWPM with UF
for practical noise levels
and discuss their near-term relevance.

For visual clarity,
\crefrange{fig:syndrome_validation}{fig:almost_local_burning_peeling}
use $G$ for code capacity.
However,
the reader should keep in mind
the rest of the paper
(save \cref{sec:actis}
which focuses on circuit-level noise)
also applies to the other noise models.

\subsection{UF}
\label{sec:uf}
Conceptually, UF groups defects into spatial clusters
then finds a correction only from edges within a cluster.
UF keeps the total size of all clusters small
to ensure the correction is small.
The algorithm comprises the following stages:
\begin{enumerate}
	\item \textbf{Syndrome Validation}
	An active cluster nucleates at each defect
	which grows outward in all directions at the same speed.
	\item \textbf{Spanning Forest Growth}
	A spanning tree is grown within each cluster.
	Using our parity-calculation scheme,
	this stage comes for free.
	\item \textbf{Peeling} Each edge of each tree is peeled
	i.e.\ removed from the tree
	and included or excluded from the correction
	depending on the syndrome.
\end{enumerate}

\subsubsection{Syndrome Validation}
\Cref{fig:syndrome_validation} shows a syndrome validation example.
Mathematically, a cluster $C =(V_C, E_C)$ is a connected subgraph of $G$.
A cluster is \emph{active} iff there exists no correction for its defects
using only edges in the cluster i.e.
$\nexists \mathbb C \subseteq E_C :\sigma(\mathbb C) =\mathbb S \cap V_C$.
Else it is \emph{inactive}.
Intuitively,
an inactive cluster is one where we have accounted for all defects
with a correction local to that cluster;
an active cluster still seeks such a correction.
In practice we determine activity by the following lemma
adapted from \cite[Lemma~1]{Delfosse2021}
and proved in \cref{sec:proof_of_lem:cluster_activity}.
\begin{lem}\label{lem:cluster_activity}
Cluster $C$ is active iff it has an odd defect count
and touches no boundary i.e.
\begin{equation}
`\big(\textnormal{$|\mathbb S \cap V_C|$ odd})
\wedge
`\big(V_\textnormal{boundary} \cap V_C =\varnothing).
\end{equation}
\end{lem}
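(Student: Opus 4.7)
The plan is to reduce the statement to an elementary $\mathbb F_2$-linear algebra observation about the incidence map on $C$. Let $\partial : 2^{E_C} \to 2^{V_C}$ denote the map $E' \mapsto \bigsd_{e \in E'} e$, so that by the definition of $\sigma$ we have $\sigma(\mathbb C) = V_\text{bulk} \cap \partial(\mathbb C)$ for any $\mathbb C \subseteq E_C$. The first thing I would note is that $\partial$ is $\mathbb F_2$-linear and every single-edge contribution $\{u, v\}$ is an even-sized subset of $V_C$, so $\partial(\mathbb C)$ always has even cardinality.

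For the $(\Leftarrow)$ direction (both conditions imply activity), the argument is then immediate: since $C$ touches no boundary, $V_C \subseteq V_\text{bulk}$ and $\sigma(\mathbb C) = \partial(\mathbb C)$ is even-sized for every $\mathbb C \subseteq E_C$, which cannot equal the odd-sized set $\mathbb S \cap V_C$. Hence no valid correction exists within $E_C$, so $C$ is active.

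For the $(\Rightarrow)$ direction I would argue the contrapositive. Assume $|\mathbb S \cap V_C|$ is even \emph{or} $C$ contains some boundary node $b \in V_\text{boundary} \cap V_C$, and construct a $\mathbb C$ with $\sigma(\mathbb C) = \mathbb S \cap V_C$. Define a target set $T \subseteq V_C$ by $T := \mathbb S \cap V_C$ in the even case and $T := (\mathbb S \cap V_C) \cup \{b\}$ otherwise; in either situation $|T|$ is even and $T \cap V_\text{bulk} = \mathbb S \cap V_C$. I would then realise $T = \partial(\mathbb C)$ by pairing up the elements of $T$ arbitrarily, choosing a path in $C$ between each pair (which exists because $C$ is connected), and taking $\mathbb C$ to be the symmetric difference of these path edge sets; restricting $\partial(\mathbb C)$ to $V_\text{bulk}$ then yields $\sigma(\mathbb C) = \mathbb S \cap V_C$ as required.

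The only genuinely non-trivial step is verifying that this symmetric-difference-of-paths construction has the claimed boundary under $\partial$. Along a single path, internal vertices appear in exactly two path edges and cancel in $\partial$, while the endpoints appear once and survive; overlapping or crossing paths cancel correctly because $\partial$ is $\mathbb F_2$-linear, so $\partial(\mathbb C)$ is the symmetric difference of the endpoint pairs, which is precisely $T$. I expect this combinatorial bookkeeping to be the main thing the formal proof needs to spell out, while the rest follows immediately from the setup.
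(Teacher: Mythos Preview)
Your proposal is correct and follows essentially the same route as the paper: a parity obstruction for one direction and a symmetric-difference-of-paths construction for the other, justified via $\mathbb F_2$-linearity of the boundary/syndrome map (what the paper records as distributivity of $\sigma$). Your treatment is in fact slightly more complete than the paper's appendix, which in the constructive direction handles only the even-defect case explicitly; your device of adjoining a boundary vertex $b$ to the target set $T$ cleanly covers the remaining odd-with-boundary case that the paper leaves implicit.
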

\noindent
Before syndrome validation,
a cluster $(`{v}, \varnothing)$ is initialised for every $v \in V$
(so there is an active cluster for each defect
and an inactive one for each nondefect).
Each edge in $E$ has a \emph{growth value} in $`{0, \frac12, 1}$
which is initialised at 0.
Syndrome validation proceeds via some number of growth rounds
and stops when all clusters are inactive.
In a growth round:
\begin{enumerate}
	\item Each active cluster grows all edges around it by $\frac12$.
	\item Each new fully grown edge
	merges the two clusters at its endpoints into one bigger cluster,
	if the endpoints did not already belong to the same cluster.
	Mathematically, if clusters $A$ and $B$ merge along edge $e$
	the resultant cluster is
	\begin{equation}
	C =`\big(V_A \cup V_B, E_A \cup E_B \cup `{e}).
	\end{equation}
\end{enumerate}
One may wonder why edges grow by only $\frac12$ per growth round.
If they grew by 1 per round,
a pair of defects separated by two edges would merge
just as fast as a pair separated by one edge.
Growing edges by $\frac12$ per round distinguishes these cases,
reducing the number of fully grown edges
hence the size of all clusters.

Syndrome validation has a runtime $\mathcal O`\big(N\alpha(N))$
if the Union--Find data structure
(after which the Union--Find decoder is named)
is used \cite{Tarjan1975}.
It ensures the \emph{existence} of a correction using only cluster edges
while the next two stages,
shown in \cref{fig:growth_and_peeling},
find an \emph{instance} of one.
\begin{figure}
	\centering
	\includegraphics[width=0.5\textwidth]{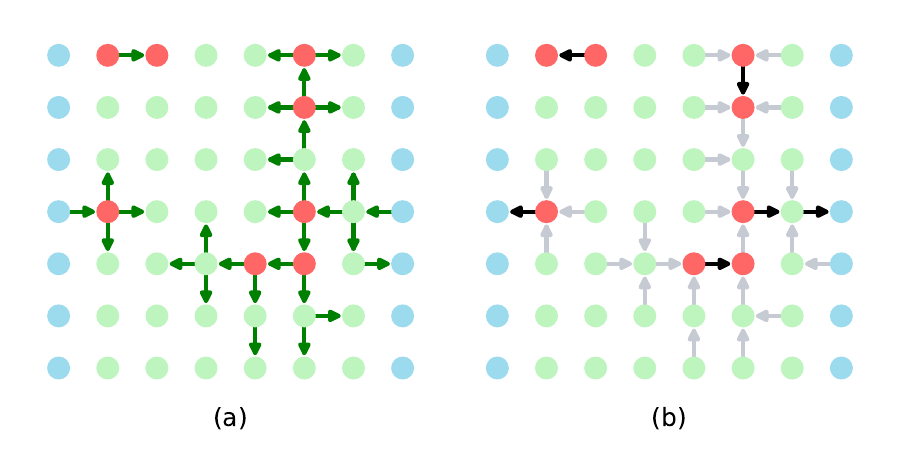}
	\caption{Continuing the example in \cref{fig:syndrome_validation}:
	(a) A spanning forest is grown in the direction of the arrows.
	Each arrow is an edge in the forest
	and points from parent to child node.
	(b) The forest is peeled in the direction of the arrows.
	The set of black arrows is the correction returned by the decoder.}
	\label{fig:growth_and_peeling}
\end{figure}

\subsubsection{Spanning Forest Growth}
For each cluster $C$ a root node is chosen as follows:
if $C$ touches a boundary, one of its boundary nodes is chosen;
else, any node is chosen.
A spanning tree of $C$
(a subgraph of $C$ that is
connected,
acyclic,
and spans $V_C$)
is then grown from this root.
The union of all the trees is the forest.
This stage is done by breadth- or depth-first search;
either way, the runtime is linear
in the number of edges in the forest i.e.\ $\mathcal O(N)$.

\subsubsection{Peeling}\label{sec:peeling}
A correction $\mathbb C =\varnothing$ is initialised
then the forest is traversed edge by edge, starting from the leaves.
Each time a defect is encountered all edges traversed thereafter,
until another defect is encountered,
are added to $\mathbb C$.
Once the whole forest has been traversed
$\mathbb C$ will satisfy $\sigma(\mathbb C) =\mathbb S$
\cite[Theorem~1]{Delfosse2020}.
Since each edge in the forest is traversed exactly once
the runtime is $\mathcal O(N)$.

\subsubsection{Weighted Edges}
Under phenomenological and circuit-level noise,
the bitflip probability may vary for each edge.
In the case these probabilities are known,
UF can be made more accurate by
modifying syndrome validation to grow clusters along weighted edges
i.e.\ whose growth values are more granular than
$`{0, \frac12, 1}$ \cite{Huang2020}.

This concludes our review of
	the well-established surface code
	and UF
paradigms,
using the notation that we require for the remainder of the paper.
We now briefly discuss
recent literature working towards local architectures for implementing UF,
especially Helios.

\section{Almost-Local UF}\label{sec:almost_local_uf}
It is clear the nature of UF is quite local:
	all edges involved connect neighbouring nodes,
	and two clusters communicate only when they touch;
so it seems sensible to implement UF as locally as possible.
This may have advantages over the standard picture
in which the decoder protocol is implemented using a
	single,
	fully fledged
classical processor,
located remotely from the actual qubits.
Instead,
one can imagine a more distributed decoder comprising
	a very simple classical processor
		corresponding to each node
	and an interprocessor communication link
		corresponding to each edge.
We can visualise this local architecture
as directly corresponding to \cref{fig:SS_tree}(a) if
	nodes represent processors and
	edges represent the links.

This alternative paradigm has various benefits.
	Firstly, its computation is parallelised:
		non-neighbouring computations can be run
		independently and simultaneously;
		we will see this leads to superior runtime scaling.
	Secondly, it allows
		each processor to sit closer to its data source (the ancilla qubit),
		simplifying wiring and reducing signal losses
		\cite[p \numrange{6}{7}]{Vandersypen2017}.
	Thirdly, there may be implications in terms of robustness:
		if one part breaks down then only that part,
		rather than the whole decoder,
		must be replaced.

In \cref{sec:helios} we give a conceptual overview of
Helios developed by Liyanage et al.~\cite{Liyanage2023} and in
\cref{sec:macar} we describe Macar in detail,
including our parity-calculation scheme
that simplifies the physical architecture
and allows a spanning forest to be grown in one timestep regardless of $d$.

\subsection{Helios}\label{sec:helios}
The authors consider only syndrome validation
as this is the only stage that is nontrivial to implement locally.
Their implementation uses the local architecture just described
i.e.\ there is a
	processor for each node in $G$ and a
	communication link for each edge.
For the rest of this section we refer to
processors as nodes and
communication links as edges.

Clusters are identified by a unique integer stored by all its nodes.
When two clusters merge
the higher-integer cluster becomes part of the other one.
This occurs via a flooding algorithm:
each node looks at its neighbours along fully grown edges;
if it sees a lower integer than its own,
it stores that lower integer instead.
This results in a wave of change
propagating across the higher-integer cluster,
starting from where the two clusters touched.
After the flood,
all the nodes of the higher-integer cluster will store the lower integer.
Generalising to three or more clusters merging at once is simple:
a flood propagates from each touching point but eventually
only the lowest-integer cluster will remain.

As well as merging,
the other task in syndrome validation is
determining the activity of each cluster.
The authors devised the following messaging system to tackle this.
Each cluster always has a well-defined root node.
Whenever a defect changes its stored integer,
it sends a message to the root of the cluster it has just joined.
Every root stores a bit and flips it whenever it receives a message.
Once all flooding has finished
and all messages have reached their destination,
the bit of each root will equal
its cluster's defect count parity.
This,
by \cref{lem:cluster_activity},
determines the activity of clusters which do not touch a boundary.
The authors do not mention treatment of clusters which touch a boundary
but we do so in \cref{sec:syndrome_validation}.
These messages are sent via a network
comprising
	the edges of $G$
	and additional edges between non-neighbouring nodes
but our modification removes the need for them.

In addition to the nodes,
there is a central controller
which manages the global variables of the UF algorithm.
It connects to each node
via a hierarchical tree
which allows restricted communication:
the controller can broadcast the same message to all nodes
and receive Boolean messages from any node,
but does not know the sender.
This is why we say Helios is \emph{almost} local.
The authors leave tree height as a design choice;
in this paper we follow their implementation,
suitable for low code distances,
and set it to 1
i.e.\ connect the controller directly to each node.
This allows us to
ignore the time cost
of controller--node communication.

In this limit,
the worst-case runtime of Helios is $\mathcal O(N)$
as the maximum possible cluster diameter
(defined for a graph as the maximum length of a shortest path)
is $\mathcal O(N)$
and the duration to flood such a cluster
(example in \cref{fig:snake})
\begin{figure}
	\centering
	\includegraphics[width=0.25\textwidth]{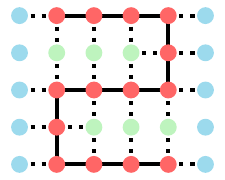}
	\caption{A maximum-diameter cluster
	is a winding snake-like path of length $\mathcal O(N)$
	where $N$ is the number of nodes in the graph $G$.}
	\label{fig:snake}
\end{figure}
scales linearly with its diameter.
This is formally better
than the $\mathcal O`\big(N \alpha(N))$ of original UF
owing to the parallelism of Helios.
More impressive is the \emph{mean} runtime which,
for phenomenological noise
i.e.\ $N =\mathcal O(d^3)$,
the authors report as sublinear in $d$
from numerics.

\subsection{Macar}\label{sec:macar}
We assume decoding to be cellular automata-like
i.e.\ via discrete timesteps with
	each component performing one operation,
	and messages travelling one edge,
per timestep.

The controller has a variable attribute
\verb|stage| indicating which stage it is in.
It is initialised to \verb|growing|
and follows the flowchart in \cref{fig:stage_flowchart}.
\begin{figure}
	\centering
	\includegraphics[width=0.4\textwidth]{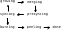}
	\caption{Flowchart for the \texttt{stage} variable
	which alternates between fixed- and arbitrary-duration stages.}
	\label{fig:stage_flowchart}
\end{figure}
There are seven stages:
	\verb|growing|,
	\verb|merging|,
	\verb|presyncing|,
	\verb|syncing|,
	\verb|burning|,
	\verb|peeling|,
	\verb|done|.
The odd stages last exactly one timestep;
the even stages, an arbitrary number of timesteps.
During syndrome validation
the controller cycles through the first four stages
once per growth round.
\verb|burning| is for spanning forest growth,
\verb|peeling| is for peeling,
and \verb|done| indicates the end of the decoding cycle.

\begin{figure*}[t]
\begin{minipage}{\linewidth}
\begin{algorithm}[H]
\caption{Run by controller every timestep.}
\label{alg:controller_advance}
\begin{algorithmic}
	\Procedure{AdvanceMacarController}{}
		\If{$(\text{not}~v.\texttt{busy})~\forall v \in V$}
			\If{$\texttt{stage} = \texttt{syncing}$}
				\If{$(\text{not}~v.\texttt{active})~\forall v \in V$}
					\State $\texttt{stage} \gets \texttt{burning}$
				\Else
					\State $\texttt{stage} \gets \texttt{growing}$
				\EndIf
			\Else
				\State $\texttt{stage} \gets$ next stage
				\Comment{Stage order is as introduced in \cref{sec:macar}.}
			\EndIf
		\EndIf
	\EndProcedure
\end{algorithmic}
\end{algorithm}
\end{minipage}
\end{figure*}

\begin{figure*}
\begin{minipage}{\linewidth}
\begin{algorithm}[H]
\caption{The four procedures each node can run during syndrome validation.}
\label{alg:node_advance}
\begin{algorithmic}
	\Procedure{Growing}{$v$}
		\If{$v.\texttt{active}$}
			\ForAll{unfully grown edges $e$ incident to $v$}
				\State $e.\texttt{growth} \gets e.\texttt{growth} +\frac12$
			\EndFor
		\EndIf
	\EndProcedure
	\State
	\Procedure{Merging}{$v$}
		\State $v.\texttt{busy} \gets \texttt{false}$
		\If{($v \in V_\d$) and ($v$ not a root) and ($v.\texttt{anyon}$)}
			\State $v.\texttt{busy} \gets \texttt{true}$
			\State relay anyon in direction of $v.\texttt{pointer}$
		\EndIf
		\For{$u \in \operatorname{access} v$}
			\If{$u.\texttt{CID} < v.\texttt{CID}$}
				\State $v.\texttt{busy} \gets \texttt{true}$
				\State $v.\texttt{pointer} \gets$ toward $u$
				\State $v.\texttt{CID} \gets u.\texttt{CID}$
			\EndIf
		\EndFor
	\EndProcedure
	\State
	\Procedure{Presyncing}{$v$}
		\State $v.\texttt{active} \gets$
		($v \in V_\d$) and ($v$ a root) and ($v.\texttt{anyon}$)
	\EndProcedure
	\State
	\Procedure{Syncing}{$v$}
		\State $v.\texttt{busy} \gets$
		(not $v.\texttt{active}$) and
		($\exists u \in \operatorname{access} v: u.\texttt{active}$)
		\State $v.\texttt{active} \gets v.\texttt{active}$ or $v.\texttt{busy}$
	\EndProcedure
\end{algorithmic}
\end{algorithm}
\end{minipage}
\end{figure*}

Each node $v \in V$ is assigned a unique integer \verb|ID|
as in \cref{fig:SS_tree}(a)
and has the following variable attributes:
\begin{itemize}
	\item \verb|defect| is a boolean indicating whether $v \in \mathbb S$.
	\item \verb|active| is a boolean indicating
	whether the cluster that $v$ is in is active.
	Since $v$ starts in its own cluster,
	$v.\verb|active|$ is initialised as $v.\verb|defect|$.
	If $v.\verb|active| = \verb|true|$ we say $v$ is active.
	\item \verb|CID| is an integer indicating
	the cluster that $v$ belongs to.
	Clusters are identified by the lowest \verb|ID| of all its nodes.
	The node of this \verb|ID| is the \emph{root} of the cluster.
	Since $v$ starts in its own cluster,
	$v.\verb|CID|$ is initialised as $v.\verb|ID|$.
	\item \verb|anyon| is a boolean indicating whether $v$ has an anyon.
	This anyon can be thought of as a particle which is passed between nodes
	during syndrome validation.
	When two anyons meet
	i.e.\ a node receives two anyons in a timestep,
	they annihilate.
	$v.\verb|anyon|$ is initialised as $v.\verb|defect|$.
	The idea is to accumulate at the root all the anyons in a cluster
	so they annihilate on the way or on arrival.
	Eventually zero or one anyon will remain at the root,
	indicating the cluster's defect count parity.
	\item \verb|pointer| is the direction $v$ should relay the anyon.
	Possible values (under circuit-level noise) are
		\texttt{C},
		\texttt{N},
		\texttt{W},
		\texttt{E},
		\texttt{S},
		\texttt{D},
		\texttt{U},
		\texttt{NU},
		\texttt{WD},
		\texttt{EU},
		\texttt{SD},
		\texttt{NWD},
		\texttt{SEU},
	representing respectively
		centre,
		north,
		west,
		east,
		south,
		down,
		up,
	and combinations thereof.
	Following the pointers starting from $v$ should eventually lead to
	the root of the cluster that $v$ is in.
	Only roots have $\verb|pointer| = \verb|C|$
	as they do not relay anyons but accumulate them.
	$v.\verb|pointer|$ is initialised to \verb|C|.
	Together, \verb|anyon| and \verb|pointer| replace
	Helios' messaging system.
	\item \verb|busy| is a boolean used to indicate whether $v$ is busy.
	$v.\verb|busy|$ is initialised to \verb|false|.
\end{itemize}

\begin{figure*}
	\centering
	\includegraphics[width=0.8\textwidth]{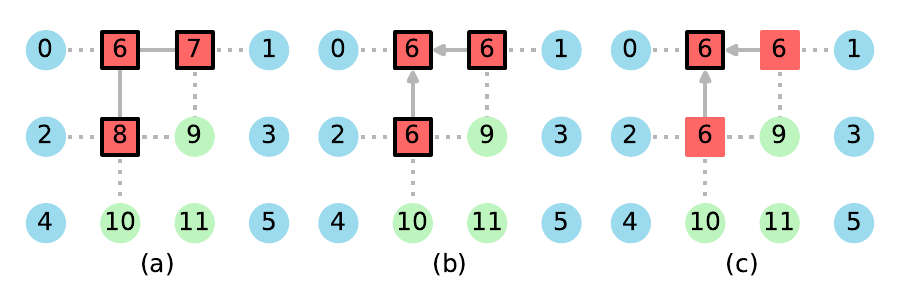}
	\caption{An example of a \texttt{merging} stage which needs three timesteps.
	Active nodes are square-shaped;
	inactive nodes are circular.
	\texttt{CID} is shown as a label.
	Nodes with anyons are outlined in black.
	Pointers are shown by arrows on edges.
	(a) The first \texttt{growth} stage has just finished.
	(b) Nodes 7 and 8 change their \texttt{CID} to 6 and point to node 6
	which is now the root of a three-node cluster.
	(c) Nodes 7 and 8 relay their anyons to the root;
	the two anyons annihilate and one remains,
	indicating the cluster's defect count is odd.
	This cluster touches no boundary so by \cref{lem:cluster_activity}
	is active.}
	\label{fig:almost_local_merging}
\end{figure*}

Each edge has a variable attribute \verb|growth| equal to its growth value,
initialised as $0~\forall e \in E$.
We restrict growth values to $`{0, \frac12, 1}$
and leave granularising for future work,
but allow UF under circuit-level noise
to grow along the diagonal edges
as we found this gave a $1.5 \times$ improvement to the threshold.
Since processing is done at nodes rather than edges
we can for each edge pick one of its endpoints,
say the one of the lower \verb|ID|,
to be responsible for storing its \verb|growth|.
The set of edges a node is responsible for,
is then the \emph{owned edges}
\begin{equation}
v.\texttt{owned} :=`{uv: uv \in E~\text{and}~v.\texttt{ID} < u.\texttt{ID}}
\end{equation}
stored as a constant attribute.

In each timestep the controller runs \cref{alg:controller_advance},
which simply moves \verb|stage| on if no node is busy.
This is where the controller--node connectivity is used:
the controller need not know \emph{which} node is busy;
rather, if \emph{a} node is busy.
After every \verb|syncing| stage
the controller also checks for any active clusters.
If there are,
	another growth round is needed
	so \verb|stage| is reset to \verb|growing|.
If none,
	syndrome validation is done so \verb|stage| is set to \verb|burning|.

Once \verb|stage| reaches \verb|done|
the controller stops running \cref{alg:controller_advance}
and the decoder outputs the correction $\mathbb C$ made during peeling.
In this paper we do not consider how the decoder
	offloads $\mathbb C$
	and loads in the next $\mathbb S$
as these are topics more relevant to stream decoding.

\subsubsection{Syndrome Validation}
\label{sec:syndrome_validation}
It is helpful to define the \emph{access} of a node
as the set of neighbours along fully grown edges:
\begin{equation}
\operatorname{access} v
:=`{u \in V: uv \in E \wedge uv.\texttt{growth} = 1}.
\end{equation}
In a timestep
each node runs whichever procedure matches \verb|stage|
out of the four in \cref{alg:node_advance}.

Stage \verb|growing| is simply a change of \verb|growth|s:
each node in an active cluster
increments the \verb|growth| of edges around it,
if not already fully grown.

\Cref{fig:almost_local_merging} shows a \verb|merging| example.
In \verb|merging| two processes occur:
flooding and anyon annihilation.
Flooding is as described in \cref{sec:helios}:
each node looks at its access;
if it sees a lower \verb|CID| than its own,
it updates its own to match it.
Additionally,
the node updates its \verb|pointer| toward that neighbour.
This ensures following the pointers
always leads to the correct root promised by \verb|CID|.
In anyon annihilation,
any node with an anyon relays it in the direction of its pointer.
At the end of this stage,
a cluster's defect count is odd iff its root has an anyon.

Ideally,
pointers are updated \emph{before} anyons are relayed.
However,
the practicality of this depends on the hardware implementation
so in this paper we assume the worse case
and relay anyons in the direction of
pointers from the \emph{previous} timestep.

Stage \verb|presyncing| determines which clusters are active.
Note in \cref{fig:surface_code}(a)
boundary nodes are of lower \verb|ID| than detectors.
This ensures that if a cluster touches a boundary,
its root is a boundary node.
Each root can thus determine if its cluster is active
by checking
	if itself is a boundary node and
	if it has an anyon,
by \cref{lem:cluster_activity}.
In \verb|presyncing|,
roots of active clusters set their \verb|active| attribute to \verb|true|;
every other node sets theirs to \verb|false|.

\Cref{fig:almost_local_syncing} shows a \verb|syncing| example.
\begin{figure}
	\centering
	\includegraphics[width=0.5\textwidth]{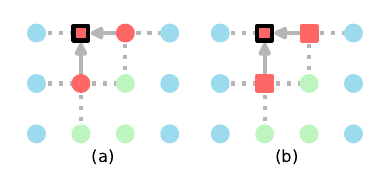}
	\caption{An example of a \texttt{syncing} stage which needs two timesteps.
	Attributes are shown as in \cref{fig:almost_local_merging}
	but without \texttt{CID}.
	(a) A \texttt{presyncing} stage has just finished.
	Note how the only active node is the root of the three-node active cluster.
	(b) Nodes 7 and 8 see this active node along a fully grown edge
	so become active themselves.
	This is the end of the flood
	as all the nodes in the active cluster are active.}
	\label{fig:almost_local_syncing}
\end{figure}
\begin{figure*}
\begin{minipage}{\linewidth}
\begin{algorithm}[H]
\caption{Run by each node during burning.}
\label{alg:burning}
\begin{algorithmic}
	\Procedure{Burning}{$v$}
		\For{$uv \in v.\texttt{owned}$}
			\If{$uv.\texttt{growth} = 1$ and not
			($u.\texttt{pointer}$ toward $v$ or $v.\texttt{pointer}$ toward $u$)}
				\State $uv.\texttt{growth} \gets 0$
			\EndIf
		\EndFor
	\EndProcedure
\end{algorithmic}
\end{algorithm}
\end{minipage}
\end{figure*}
\begin{figure*}
	\centering
	\includegraphics[width=\textwidth]{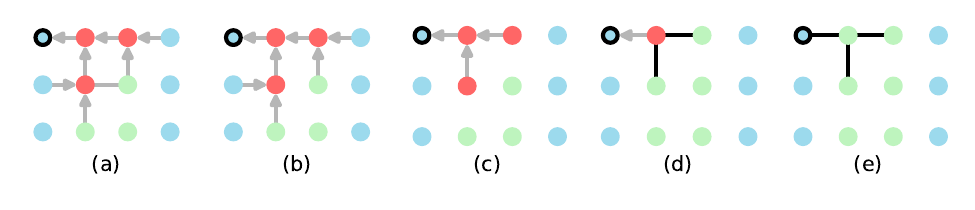}
	\caption{An example of burning and peeling.
	Attributes are shown as in \cref{fig:almost_local_syncing}.
	(a) Before burning,
	a cluster contains a cycle,
	which has one edge no pointer points along.
	(b) This edge is burned i.e.\ removed to leave behind a spanning forest.
	(c) The four leaves of the forest are peeled
	resulting in a smaller forest.
	(d) The two leaves of the smaller forest are peeled
	resulting in an even smaller forest.
	These two edges are added to the correction
	because the leaf nodes were defects.
	Edges so far added to the correction are in black.
	(e) The final leaf is peeled and added to the correction.}
	\label{fig:almost_local_burning_peeling}
\end{figure*}
In \verb|syncing|
the activity of each cluster is propagated
from root to all its other nodes via a flood:
each node looks at its access;
if it sees an active node,
it becomes active,
if not already.
At the end of this stage,
a node is in an active cluster iff it is active.

In Helios' messaging system,
messages contain the \verb|ID| of the destination node.
The memory needed to store this scales logarithmically in $d$.
A new message is emitted whenever a defect changes its \verb|CID|
which leads to $\mathcal O(N^2)$ messages throughout syndrome validation
in the worst case.
Messages must be kept separate
so buffers are needed
but these can stall if full.
Anyon annihilation improves upon this system as
\verb|anyon| is one bit
regardless of $d$.
Throughout syndrome validation,
the anyon count never exceeds the defect count in $G$
and only ever decreases,
so is $\mathcal O(N)$ in the worst case.
There is no need for buffers
(as anyons need not be kept separate)
nor additional links between non-neighbouring nodes.

\subsubsection{Burning}
\label{sec:burning}
In original UF a spanning forest is grown in $\mathcal O(N)$ timesteps
via breadth- or depth-first search.
Macar does this
in one timestep by retrieving the forest directly from the \verb|pointer|s
in a stage we call \emph{burning},
described in \cref{alg:burning}.
\Cref{fig:almost_local_burning_peeling}(a, b) shows an example.
In burning,
any fully grown edge no pointer points along
is `burned' i.e.\ sets its \verb|growth| to 0.
The spanning forest is the set of fully grown edges which remain.
This is because the pointers already define a spanning tree
within each cluster;
burning simply removes the edges `unused' by these pointers.

\subsubsection{Peeling}
\label{sec:peeling2}
\Cref{fig:almost_local_burning_peeling}(b--e) shows
how peeling is done in Macar;
this is simply a local version of that described in \cref{sec:peeling}
and takes $\mathcal O(N)$ timesteps.
In a timestep each node runs the procedure in \cref{alg:peeling}:
each node checks if it is a leaf node of the spanning forest;
if so it will peel its corresponding leaf edge.
\begin{algorithm}[H]
\caption{Run by each node during peeling.}\label{alg:peeling}
\begin{algorithmic}[2]
	\Procedure{Peeling}{$v$}
		\State $v.\texttt{busy} \gets \texttt{false}$
		\If{($v$ not a root) and $`\big(|{\operatorname{access} v}| = 1)$}
			\State $v.\texttt{busy} \gets \texttt{true}$
			\State $`{u} := \operatorname{access} v$
			\State $uv.\texttt{growth} \gets 0$ \label{line:growth_gets_0}
			\If{$v.\texttt{defect}$}
				\State $v.\texttt{defect} \gets \texttt{false}$
				\State add $uv$ to $\mathbb C$ \label{line:add_uv_to_C}
				\State flip $u.\texttt{defect}$
			\EndIf
		\EndIf
	\EndProcedure
\end{algorithmic}
\end{algorithm}

\subsubsection{Runtime}
Macar's worst-case decoding runtime is $\mathcal O(N)$
due to the same reason as Helios':
flooding a cluster of diameter $\mathcal O(N)$
is the process with the worst scaling
in the whole decoding cycle.
Specifically,
the maximum number of timesteps to flood a cluster
equals its diameter.
\Cref{sec:runtime_analysis} shows
the mean runtime scales similarly to Helios'
as expected.

\section{Actis}
\label{sec:actis}
In this section we focus on circuit-level noise
and present a strictly local UF
i.e.\ one without a direct communication link between the controller and each node.
Instead,
the controller communicates only with the node of \verb|ID| 0.
Broadcasting of \verb|stage| from controller to nodes
happens via a global flood,
and \verb|busy| and \verb|active| propagate as signals
from node to controller by relaying.

These two new processes,
which we call
\emph{staging}
and \emph{signalling}
respectively,
are communicated through the local tree $T_\text{SS}$
shown in \cref{fig:SS_tree}(b).
\begin{figure}
	\centering
	\includegraphics[width=0.5\textwidth]{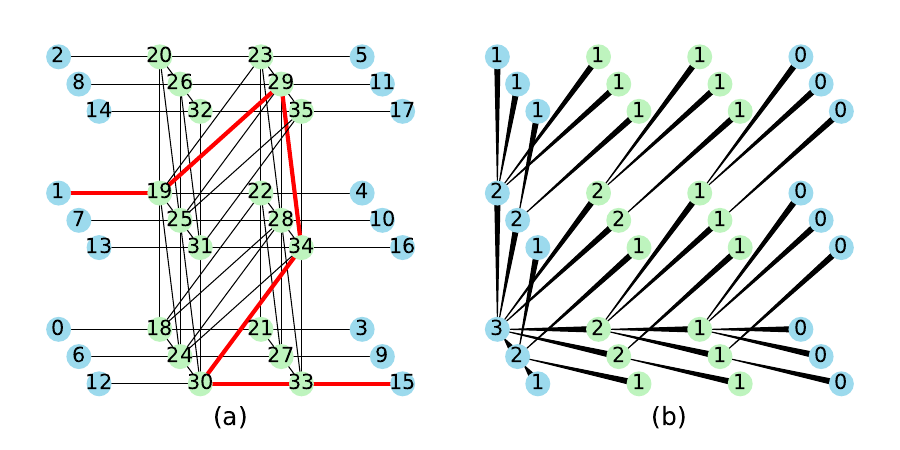}
	\caption{(a) The generalisation of \cref{fig:3D_surface_code}(a)
	to circuit-level noise.
	The diagonal edges exist in every cubelet in the bulk of $G$.
	The thick red path is an example of a logical bitflip.
	(b) The tree $T_\text{SS}$ used in Actis,
	with nodes labelled by their \texttt{span}.
	Each edge tapers toward the parent.
	}
	\label{fig:SS_tree}
\end{figure}
Signalling is similar to how anyons are relayed
but instead of following \emph{variable} pointers,
these signals follow \emph{constant} pointers
toward the controller,
traversing one edge in $T_\text{SS}$ per timestep.
This requires the controller and nodes
to store additional attributes
which are described in \cref{sec:additional_attributes}.

The change clearly increases runtime
but has the advantage of lacking long-range links,
so is an even lower stepping stone toward practicality.
The worst-case runtime
	remains $\mathcal O(N)$
	and is discussed in \cref{sec:runtime_2}.
We will see in \cref{sec:runtime_analysis}
the mean runtime is below
$\mathcal O(d^2)$ for all practical noise levels.
Moreover we note
that a rapid-transmission variant
can substantially recover the speed of Macar
within the strictly local paradigm.

\subsection{The Staging and Signalling Tree}
\label{sec:the_staging_and_signalling_tree}
As \cref{fig:SS_tree}(b) shows,
the root of $T_\text{SS}$ is node 0.
Its height
i.e.\ distance from root to a furthest node,
equals the code distance: $h(T_\text{SS}) =d$.
Almost all edges in $T_\text{SS}$ are diagonals
which already exist in $G$ for circuit-level noise.
The only additional edges are in the
	west,
	east,
	and lower
faces of $G$,
whose count is $\mathcal O(d^2)$
compared to $\mathcal O(d^3)$ for edges in $G$.
Therefore the physical overhead of $T_\text{SS}$ is low.

$T_\text{SS}$ differs from the
hierarchical tree of Helios' controller:
$T_\text{SS}$ is mostly embedded in $G$
and is truly local,
unlike the latter which is separate from $G$
and intersects $V$ only at its leaves.

\subsection{Additional Attributes}
\label{sec:additional_attributes}
The controller stores a constant integer
$\texttt{span} =1 +h(T_\text{SS})$
and has the following variable attributes:
\begin{itemize}
	\item \verb|countdown| is an integer
	which tracks how long the controller must wait
	until it is sure staging or signalling is done.
	It is initialised to 0.
	\item \verb|busy_signal| is a boolean indicating
	whether the controller receives a busy signal
	in the current timestep.
	\item \verb|active_signal| is a boolean indicating
	whether the controller \emph{has} received an active signal
	during the current \verb|syncing| \emph{stage}.
	It is initialised to \verb|false|.
\end{itemize}
Each node $v \in V$ stores the following constants:
\begin{itemize}
	\item $\texttt{span} =h(T_\text{SS}) -\operatorname{depth}v$,
	where $\operatorname{depth}v$ is the distance in $T_\text{SS}$
	from $v$ the root.
	\item \texttt{signalee} is the parent of $v$ in $T_\text{SS}$
	i.e.\ the neighbour in the direction toward the controller.
	This is the neighbour that $v$ looks to for staging,
	and sends busy and active signals to.
	For node 0, \verb|signalee| is the controller.
\end{itemize}
Each node $v$ also has the following variable attributes:
\begin{itemize}
	\item \verb|stage| is used for staging;
	possible values are the seven stages.
	$v.\verb|stage|$ is initialised to \verb|growing|.
	\item \verb|countdown| is same as that of controller
	but used only to track how long $v$ must wait until
	staging is done.
	$v.\verb|countdown|$ is initialised to 0.
	\item \verb|busy_signal| is a boolean
	indicating if in the current timestep,
	$v$ either
		is busy
		or receives a busy signal from another node.
	\item \verb|active_signal| is a boolean
	indicating if in the current timestep,
	$v$ either
		has \verb|stage| as \verb|presyncing| and becomes active,
		or receives an active signal from another node.
\end{itemize}
To manage these additional attributes
the procedures have additional steps,
explained in the next subsection.

\begin{algorithm}[H]
\caption{Run by controller every timestep in Actis.}
\label{alg:controller_advance_actis}
\begin{algorithmic}[2]
	\Procedure{AdvanceLocalController}{}
		\If{$\texttt{countdown} = 0$}
			\LComment{\texttt{stage} change block.}
			\If{$\texttt{stage} = \texttt{syncing}$}
				\If{not \texttt{active\_signal}}
					\State $\texttt{stage} \gets \texttt{burning}$
				\Else
					\State $\texttt{stage} \gets \texttt{growing}$
				\EndIf
			\Else
				\State $\texttt{stage} \gets$ next stage
			\EndIf
			\LComment{\texttt{countdown} reset block.}
			\If{$\texttt{stage} \in `{
				\texttt{merging},
				\texttt{peeling}
			}$}
				\State $\texttt{countdown} \gets \texttt{span} + 1$
				\label{line:countdown_gets_span+1}
			\Else
				\State $\texttt{countdown} \gets \texttt{span}$
				\label{line:countdown_gets_span}
			\EndIf
		\ElsIf{\texttt{busy\_signal} and $\texttt{countdown} \le 2$}
			\State $\texttt{countdown} \gets 2$
			\label{line:doppler}
			\Comment{Only occurs during arbitrary-duration stages.}
		\Else
			\State $\texttt{countdown} \gets \texttt{countdown} -1$
		\EndIf
	\EndProcedure
\end{algorithmic}
\end{algorithm}

\begin{figure*}
\begin{minipage}{\textwidth}
\begin{algorithm}[H]
\caption{Each node in Actis
runs \textsc{AdvanceLocalNode} every timestep.}
\label{alg:wrappers}
\begin{algorithmic}[2]
	\Procedure{AdvanceLocalNode}{$v$}
		\State $\Call{Proc}{} \gets$ procedure matching $v.\texttt{stage}$
		in \cref{alg:node_advance,alg:burning,alg:peeling}
		\If{$v.\texttt{stage} \in `{
			\texttt{merging},
			\texttt{syncing},
			\texttt{peeling}
		}$}
			\State \Call{AdvanceArbitrary}{$v$, \Call{Proc}{}}
		\Else
			\State \Call{AdvanceFixed}{$v$, \Call{Proc}{}}
		\EndIf
	\EndProcedure
	\State
	\Procedure{AdvanceFixed}{$v$, \Call{Proc}{}}
	\Comment{For fixed-duration stages.}
		\If{$v.\texttt{countdown} = 0$}
		\label{line:countdown_reaches_0}
			\State \Call{Proc}{$v$}
			\If{$v.\texttt{stage} = \texttt{presyncing}$}
				\State $v.\texttt{active\_signal} \gets v.\texttt{active}$
			\EndIf
			\State $v.\texttt{stage} \gets$ next stage
		\Else
			\State $\texttt{countdown} \gets \texttt{countdown} - 1$
		\EndIf
	\EndProcedure
	\State
	\Procedure{AdvanceArbitrary}{$v$, \Call{Proc}{}}
	\Comment{For arbitrary-duration stages.}
		\If{$v.\texttt{stage} = v.\texttt{signalee}.\texttt{stage}$}
			\State \Call{Proc}{$v$}
			\State $v.\texttt{busy\_signal} \gets v.\texttt{busy\_signal}$ or $v.\texttt{busy}$
			\State relay $v.\texttt{busy\_signal}$ and $v.\texttt{active\_signal}$ to $v.\texttt{signalee}$
		\Else
			\State $v.\texttt{stage} \gets v.\texttt{signalee}.\texttt{stage}$
			\label{line:stage_flooding}
			\Comment{Staging.}
			\State $v.\texttt{countdown} \gets v.\texttt{span}$
		\EndIf
	\EndProcedure
\end{algorithmic}
\end{algorithm}
\end{minipage}
\end{figure*}
\subsection{Modified Procedures}\label{sec:modified_procedures}
For the controller,
\cref{alg:controller_advance} is replaced by
\cref{alg:controller_advance_actis}.
Main differences are:
	checking for any busy nodes is replaced
		by checking if \verb|countdown| has reached 0;
	checking for any active nodes,
		by checking if an active signal has been received.

For Macar
the four stages that lasted one timestep
each followed an arbitrary-duration stage,
so in Actis,
staging must be used to broadcast
these four stages to all nodes.
As indicated in \Cref{line:countdown_gets_span},
staging lasts \verb|span| timesteps:
the duration for a \verb|stage| flood to propagate
from controller to the furthest nodes.
Hence these four stages are still of fixed duration
(for fixed $d$).

The other three (arbitrary-duration) stages need not be broadcast
as they each follow a fixed-duration stage
so all nodes know exactly when to start them.
In \verb|merging| and \verb|peeling|
the controller must wait at least $\verb|span|+1$ timesteps
as it takes \verb|span| timesteps for a signal to go
	from a furthest node
	to the controller.
\Cref{line:countdown_gets_span+1} indicates this.
In \verb|syncing| the furthest nodes
are never busy nor active (all are boundary nodes)
so the controller need only wait at least \verb|span| timesteps
(\cref{line:countdown_gets_span}).
In any of these three stages,
after this minimum duration,
busy signals are no more than two edges apart
due to \cref{prop:doppler}
so \verb|countdown| resets to 2 in \cref{line:doppler}.
\begin{prop}[Doppler effect]\label{prop:doppler}
For any busy signal received by the controller
after $\verb|span| +1$ timesteps in \verb|merging|,
another busy signal must have been received
\num{\le 2} timesteps earlier.
\end{prop}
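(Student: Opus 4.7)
The plan is to identify, for each busy node $v$ at time $t_v$ whose signal reaches the controller at some $T_v > \verb|span|+1$, a causal predecessor whose signal reaches the controller within two timesteps earlier. The key quantitative ingredient is a Doppler-like analysis: \verb|busy_signal|s propagate one edge per timestep along the signalee chain, so the signal from a node at signalee distance $d(v)$ busy at time $t_v$ arrives at the controller at $T_v = t_v + d(v) + 1$, and applying the stated signalee rule (down, then east, then north) on the 3D lattice gives $|d(u) - d(v)| = 1$ for every pair of graph-neighbours of $G$.

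I would first establish that after $\verb|span|+1$ timesteps of merging every node is in the \verb|merging| stage, so $v$'s busy event at $t_v$ must be caused by a busy event at time $t_v - 1$. Two cases arise. (i) $v$ relayed an anyon, whose sender $x$ (the neighbour whose \verb|pointer| points toward $v$) was busy at $t_v - 1$. (ii) $v$ updated its \verb|CID| on an active flooding wavefront; one identifies a neighbour $x$ on the same wavefront that was busy at $t_v-1$ propagating the new \verb|CID|.

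For the predecessor $x$ we compute
\[
T_v - T_x = 1 - (d(x) - d(v)) \in \{0, 2\},
\]
using $|d(x) - d(v)| = 1$. If $T_x = T_v - 2$ the claim follows. If instead $T_x = T_v$, recurse on $x$: its predecessor $y$ is busy at $t_v - 2$, and the same formula gives $T_y \in \{T_v - 2,\, T_v,\, T_v + 2\}$. Any equal-$T$ chain of predecessors corresponds to a wavefront propagating strictly away from the controller at unit speed, so $d(\cdot)$ strictly increases along such a chain; since $d(\cdot) \le \verb|span|$ the chain must eventually turn back, and the first step at which $d$ decreases supplies a predecessor whose signal arrives at exactly $T_v - 2$.

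The main obstacle lies in case (ii): when the lower-\verb|CID| accessible $u$ of $v$ has held that value since before merging began, $u$ itself is not busy at $t_v-1$, so one cannot simply take $x = u$. The delicate step is to locate a neighbour of $v$ on a genuinely propagating flooding front and argue it was busy at $t_v - 1$, which requires a careful structural analysis of how \verb|CID| updates cascade through a cluster during merging. Once that case is handled, the remaining Doppler calculation and the termination of the equal-$T$ recursion are essentially routine.
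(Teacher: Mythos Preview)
Your approach is the paper's: locate a causal predecessor busy at $t_v - 1$ in $\{v\} \cup \operatorname{accessibles}(v)$ and bound the arrival-time gap by the Doppler formula. The paper's argument is two sentences and simply asserts such a predecessor exists with gap at most~$2$; your recursion for the equal-$T$ case is actually more careful than the paper, which leaves that possibility implicit.

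The ``main obstacle'' you raise in case~(ii), however, does not occur. Suppose $v$ is busy at $t_v$ (with $t_v$ after the first \texttt{merging} timestep) because an accessible $u$ satisfies $u.\texttt{CID}^{(t_v-1)} < v.\texttt{CID}^{(t_v-1)}$, superscripts denoting end-of-timestep values. Accessibility is fixed throughout \texttt{merging}, so $u$ was already an accessible of $v$ at step $t_v-1$, and the \textsc{Merging} update forces $v.\texttt{CID}^{(t_v-1)} \le u.\texttt{CID}^{(t_v-2)}$. Chaining the two inequalities gives $u.\texttt{CID}^{(t_v-1)} < u.\texttt{CID}^{(t_v-2)}$: $u$ lowered its own \texttt{CID} at step $t_v - 1$ and was therefore itself busy then. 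The scenario ``$u$ has held that value since before merging began'' is thus incompatible with $u$ triggering $v$'s busy bit at $t_v$; no structural analysis of cascades is needed, and the predecessor is simply $x = u$. (You do omit one genuine subcase in~(i): when $v$ retained its own anyon because it was still a root at $t_v-1$, no sender $x$ exists, but then $v$'s \texttt{CID} must have changed at $t_v-1$ for it to cease being a root at $t_v$, so $v$ itself was busy at $t_v-1$---this is why the paper includes $v$ in the predecessor set.)
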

\begin{proof}
All busy signals emitted in the first \verb|merging| timestep
reach the controller in $\verb|span| +1$ timesteps or less.
Hence
any busy signal $s$ received after this time
must have been emitted
by some node $v$
at some time $t_s$ \emph{after} the first \verb|merging| timestep.
This implies
at $t_s -1$
some node
$u \in `{v} \cup \operatorname{access} v$
was busy,
so emitted a busy signal $s'$.
The duration between the controller receiving $s'$ and $s$ is maximised
when $u$ is 1 edge closer to the controller than $v$.
In this case $s'$ arrives 2 timesteps before $s$.
\end{proof}
The proof of \cref{prop:doppler} can be understood as a Doppler effect:
waves of busyness travel through clusters
at a speed of one edge per timestep.
Wavefronts emit busy signals
at a frequency of once per timestep.
If a wave travels away from the controller,
the controller will receive these signals
at a frequency of once per two timesteps.

In a timestep
each node runs \textsc{AdvanceLocalNode} in \cref{alg:wrappers}.
This reuses \cref{alg:node_advance,alg:burning,alg:peeling}
but wraps each procedure in either
	\textsc{AdvanceFixed} or
	\textsc{AdvanceArbitrary}
depending on whether the stage is of fixed or arbitrary duration.

A node's \verb|countdown| is used
only when its \verb|stage| is of fixed duration,
during which it should equal the controller's \verb|countdown|.
It tells the node exactly when to change \verb|stage|:
as soon as it reaches 0 in \cref{line:countdown_reaches_0}.
At other times
i.e.\ after arbitrary-duration stages,
the update of \verb|stage| occurs via staging
(\cref{line:stage_flooding}).

\subsection{Worst-Case Runtime}
\label{sec:runtime_2}
The new processes in Actis,
staging and signalling,
take $\mathcal O(d)$ timesteps
and are used $\mathcal O(1)$ times per growth round.
A decoding cycle needs a maximum of $\mathcal O(d)$ growth rounds
so these processes contribute a maximum $\mathcal O(d^2)$ timesteps
to the overall runtime.
Hence the worst-case runtime scaling of Actis remains $\mathcal O(N)$,
where $N =\mathcal O(d^3)$ for circuit-level noise.

\section{Runtime Analysis}\label{sec:runtime_analysis}
\begin{figure}
	\centering
	\includegraphics[width=0.5\textwidth]{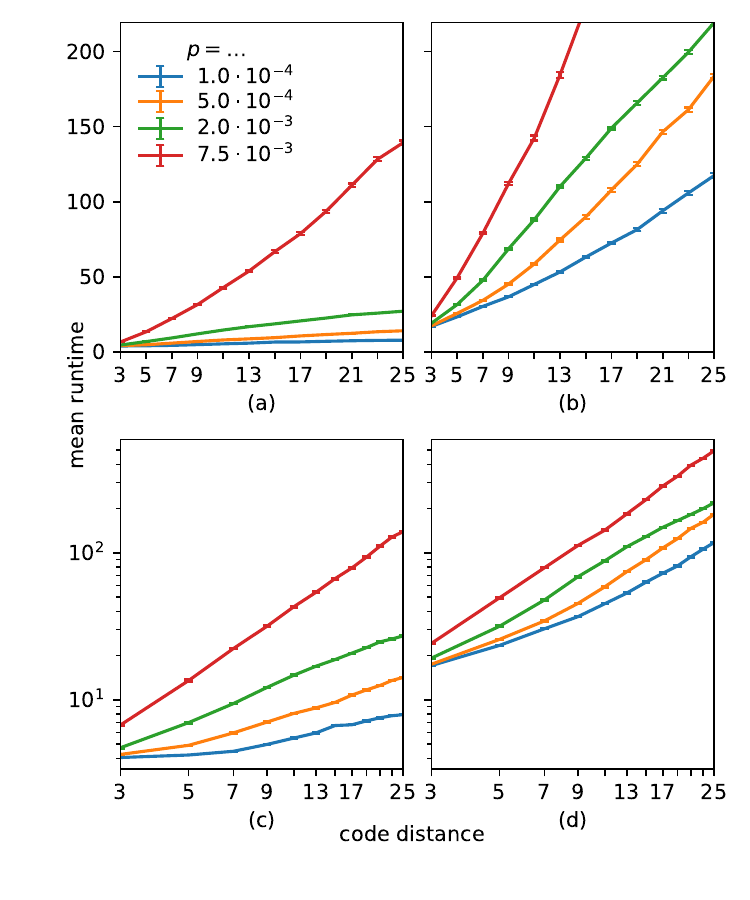}
	\caption{Syndrome validation runtime
	against code distance,
	under circuit-level noise
	for different physical error probabilities $p$.
	Each datapoint is the mean of \num{1e3} samples of decoding cycle;
	errorbars show standard error.
	(a) Macar.
	(b) Actis.
	(c, d) same as (a, b) but on a log--log scale.}
	\label{fig:mean_runtime}
\end{figure}
We wrote a Python package \cite{Chan2023a_quantum_bibstyle} implementing
	original UF,
	Macar
	and Actis.
We then tested the runtimes of Macar and Actis
by emulating a decoding cycle
under circuit-level noise
and counting the number of timesteps to complete syndrome validation.
This is a good proxy for decoding runtime as
we already know the exact runtime of burning,
and peeling is never slower than
the last round of merging.

\begin{rem*}
\verb|burning| and \verb|peeling| need not be standalone stages.
We could have incorporated peeling into merging if,
instead of relaying an anyon along an edge,
we peeled that edge,
with the modifications in \cref{alg:peeling}
that \cref{line:growth_gets_0} is not done
and \cref{line:add_uv_to_C} becomes $\mathbb C \gets \mathbb C \sd `{uv}$.
Defects would then take over the role of anyons
and the decoding cycle would finish after the last \verb|syncing| stage,
without need for burning.
Although small,
this modification may be useful in future implementations
for which we want to minimise the number of stages
e.g.\ a local UF stream decoder.
\end{rem*}

Note the runtimes we record are for correcting only bitflips.
The full decoding cycle in practice requires
two copies of the decoder:
one for each error type (bit- and phaseflip).
Since these copies can operate simultaneously
the overall runtime is always the slower of the two.
The runtimes presented here are thus to be interpreted as raw,
as they demonstrate the scaling of the fundamental algorithm
but are not subject to the above effect.

\Cref{fig:mean_runtime} shows the mean runtime
of Macar and Actis.
We use this data to estimate in \cref{tab:gradients}
the scaling $m$,
assuming mean runtime $\propto d^m$.
\begin{table}
\caption{Gradient $m$ of the lines in \cref{fig:mean_runtime}(c, d)
estimated using Weighted Least Squares.}
\centering
\begin{tabular}{ccc}
	\hline
	\hline
	$p$	& Macar	& Actis \\
	\hline
	$\num{1.0e-4}$	& 0.27(3)	& 0.77(3) \\
	$\num{5.0e-4}$	& 0.55(3)	& 1.04(4) \\
	$\num{2.0e-3}$	& 0.86(1)	& 1.19(2) \\
	$\num{7.5e-3}$	& 1.48(1)	& 1.46(2) \\
	\hline
	\hline
\end{tabular}
\label{tab:gradients}
\end{table}
As \num{\approx 7.5e-3} is the threshold of UF,
\num{2.0e-3} is roughly the highest practical $p$
at which UF would operate.
For this value of $p$ and below:
Macar shows sublinear scaling;
Actis, subquadratic scaling.
We expect the former
as Macar is closely related to Helios
which scales sublinearly \cite{Liyanage2023}.
Note we implement Macar
with a hierarchical tree height of 1
to isolate the behaviour of \cref{alg:node_advance},
though such a tree is not scalable in $d$.
Runtimes for any scalable implementation of Macar
would incur an extra $\mathcal O(\lg d)$ contribution
to those presented here
due to a height-$\mathcal O(\lg N)$ tree.

We also evaluate runtime distribution,
shown in \cref{fig:runtime_distribution}.
\begin{figure*}
	\centering
	\includegraphics[width=\textwidth]{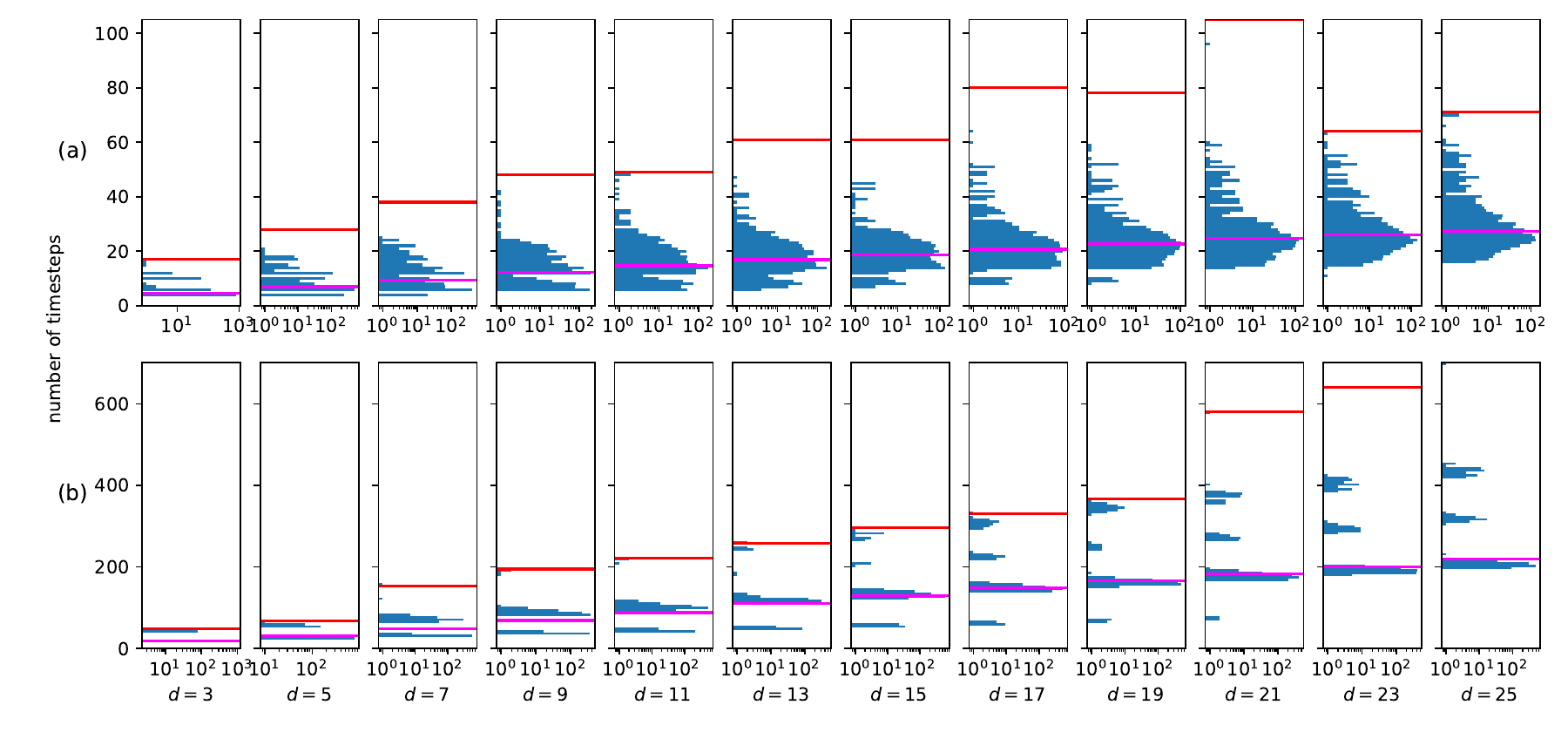}
	\caption{Syndrome validation runtime
	for physical error probability $p =\num{2.0e-3}$.
	Within each histogram:
	the horizontal magenta line shows the mean;
	red line, the maximum,
	of \num{1e4} samples of decoding cycle.
	(a) Macar; each bar has width 1.
	(b) Actis; each bar has width 6.}
	\label{fig:runtime_distribution}
\end{figure*}
Actis has equally spaced sharp peaks in runtime;
each one corresponds to one growth round.
The wide spacing implies most of its runtime is due to
staging and signalling
i.e.\ the controller waiting until countdown reaches 0 before moving on
[this can also be inferred by the difference in absolute runtime
between Macar and Actis:
at $d =25$ the former is 8.06(9) times faster than the latter].

The peak locations can be derived as follows.
If syndrome validation comprises 0 or 1 growth rounds,
\verb|countdown| resets to \verb|span| or $\verb|span|+1$ three times
(see \cref{fig:stage_flowchart}).
For 2 growth rounds,
\verb|countdown| resets seven times.
For $r \ge 0$ growth rounds,
\verb|countdown| resets $\max`{3, 4r-1}$ times.
Hence the peak corresponding to $r$ growth rounds is roughly located at
$\verb|span| \cdot \max`{3, 4r-1}$ timesteps.

\section{Asynchronous Logic}\label{sec:asynchronous_logic}
As just mentioned,
most of the runtime of Actis is due to
staging and signalling.
The fraction of runtime spent on these operations
for the realistic scenario of
	$p =\num{2.0e-3}$
	and $d =25$,
is 0.88(1).
This motivates the question:
can we recover the speed of Macar
whilst maintaining strict locality?

So far we have only considered algorithms
in which each node changes the state of its registers once per timestep,
based on states in previous timesteps.
Signals between nodes therefore
are limited to travel at one edge per timestep.
A timestep would correspond to \emph{at least} one clock cycle of
an FPGA or ASIC
(application-specific integrated circuit)
implementation
of the algorithms described in this paper.

However,
it is also possible for a
register to change state \emph{asynchronously}
i.e.\ as soon as it receives the information to do so
without having to wait for a timestep to complete.
This has one clear advantage,
and a secondary one.
Primarily,
	it is faster:
	signal speed is limited only by
	how fast each register responds to its inputs.
This speed advantage is greatest if used for
	simple,
	indiscriminate
operations like staging and signalling.
The secondary benefit
is that the logic overhead is likely lower
as the number of registered,
i.e.\ clocked,
signals is fewer and consequently
less heat is generated.

Therefore,
it makes sense to consider an Actis variant which uses
	unregistered signals relayed via buffers and/or combinational logic
		for staging and signalling,
	and traditional synchronous logic
		(based on the algorithm timesteps)
		for all other operations.
In the case that it takes less than one synchronous clock cycle
for a signal to travel across the whole array via this technique,
the runtime will be identical to Macar
whilst maintaining strict locality.
We stress that staging and signalling
are ideally suited to
implementation with unregistered signals spanning the array
since there is no risk of race conditions leading to indeterminate behaviour.

\section{Conclusion}\label{sec:conclusion}
We explore UF realised by local architectures,
following recent developments
including the Helios implementation \cite{Liyanage2023}.
We present a streamlined `almost-local' model,
and moreover an extremely practical implementation, Actis,
that is strictly local;
the first of its kind.
We numerically reproduce the sublinear in $d$ mean runtime scaling of Helios
then observe a subquadratic scaling for Actis;
both are better than the
higher-than-cubic scaling of original UF.
Further,
we note Actis is compatible with the use of asynchronous logic
to massively speed up runtime
without sacrificing locality,
making it an attractive design option.

\begin{acknowledgments}
We thank
	Armands Strikis,
	David Garner
	and Nicolas Delfosse
for useful discussions.
The authors would like to acknowledge the use of
the University of Oxford Advanced Research Computing
(ARC)
facility~\cite{Richards2015_quantum_bibstyle} in carrying out this work
and specifically the facilities made available
from the EPSRC QCS Hub grant
(agreement No.\ EP/T001062/1).
The authors also acknowledge support from
two EPSRC projects:
RoaRQ (EP/W032635/1)
and SEEQA (EP/Y004655/1).
TC acknowledges an EPSRC DTP studentship.
Open-source Python libraries used in this work include
\verb|matplotlib|,
\verb|networkx|,
\verb|numpy|,
\verb|pandas|,
\verb|pymatching|,
\verb|pytest|,
\verb|scipy|,
\verb|statsmodels|.
\end{acknowledgments}

\section{Author Contributions}
\label{sec:contributions}
SCB contributed to the core concepts for Macar and Actis,
and authored elements of the paper.
TC established the core concepts Macar and Actis,
implemented,
tested
and validated these concepts,
wrote the emulation code,
collected and processed the numerical data,
and was the primary author of the paper.

\paragraph{Note added:}
During the preparation and initial review process of this paper,
several relevant preprints were updated or announced.
Reference \cite{Liyanage2023}
was updated
to incorporate a concept comparable to our pointer mechanism.
Reference \cite{Griffiths2023} showed that
the Union--Find data structure used in original UF
(but not in our versions)
is redundant and worsens its runtime.
Meanwhile the state of the art
in realising cluster-type decoder hardware
has advanced with a new report~\cite{Barber2023}
describing FPGA and ASIC devices.

\bibliographystyle{quantum}
\bibliography{tchbib}

\appendix
\section{Lemma Proofs}
\subsection{Proof of \cref{lem:leftover}}
\label{sec:proof_of_lem:leftover}
First consider the following.
\begin{lem}[Distributivity of $\sigma$]
The syndrome of the symmetric difference of errors
is the symmetric difference of their syndromes
i.e.\ for $n \ge 2$:
\begin{equation}\label{eq:syndrome_of_n_errors}
\sigma`\bigg(\bigsd_{i=1}^n \mathbb E_i)
=\bigsd_{i=1}^n \sigma(\mathbb E_i).
\end{equation}
\end{lem}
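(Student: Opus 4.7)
The plan is induction on $n \ge 2$. The step from $n$ to $n+1$ is immediate: setting $\mathbb E := \bigsd_{i=1}^n \mathbb E_i$ and applying the $n=2$ identity to $\mathbb E$ and $\mathbb E_{n+1}$, then invoking the inductive hypothesis on the former, delivers the claim. So all the real content sits in the base case $n = 2$, namely $\sigma(\mathbb E_1 \sd \mathbb E_2) = \sigma(\mathbb E_1) \sd \sigma(\mathbb E_2)$.

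For the base case I would factor the identity into two elementary distributivity facts. The first is that the ``big symmetric difference'' distributes over $\sd$ applied to its indexing set:
\begin{equation}
\bigsd_{e \in \mathbb E_1 \sd \mathbb E_2} e
= \bigsd_{e \in \mathbb E_1} e \sd \bigsd_{e \in \mathbb E_2} e.
\end{equation}
I would verify this by viewing subsets of $V$ as elements of the $\mathbb F_2$-vector space $\mathbb F_2^V$, in which $\sd$ is addition and $\varnothing$ is zero. An edge $e \in \mathbb E_1 \cap \mathbb E_2$ contributes its endpoint set twice on the right-hand side and hence cancels in $\mathbb F_2$, mirroring its absence from the index set $\mathbb E_1 \sd \mathbb E_2$ on the left-hand side; every other edge contributes once on each side. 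The second fact is that intersecting with $V_\text{bulk}$ distributes over $\sd$:
\begin{equation}
V_\text{bulk} \cap (A \sd B)
= (V_\text{bulk} \cap A) \sd (V_\text{bulk} \cap B)
\end{equation}
for any $A, B \subseteq V$. This is a one-line membership check: vertices outside $V_\text{bulk}$ are trivially excluded by both sides, while for $v \in V_\text{bulk}$ the intersections are inert, reducing the equality to a tautology. Chaining the two facts with $A := \bigsd_{e \in \mathbb E_1} e$ and $B := \bigsd_{e \in \mathbb E_2} e$ closes the base case.

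The argument is purely set-theoretic and presents no genuine obstacle. The only point that warrants mild care is that the symbol $\sd$ appears at two logically distinct levels, once as an operation on subsets of $E$ (indexing the $\bigsd$) and once as an operation on subsets of $V$ (the values being combined); the first distributivity fact above is precisely the bridge between those two levels, and flagging this cleanly should prevent any confusion in the write-up.
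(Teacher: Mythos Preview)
Your proposal is correct and follows essentially the same route as the paper: induction on $n$, with the inductive step reduced to the $n=2$ case, and the $n=2$ case unpacked via the observation that edges in $\mathbb E_1 \cap \mathbb E_2$ contribute twice and cancel (the paper phrases this as $e \sd e \equiv \varnothing$). The one point you make explicit that the paper leaves implicit is the distributivity of $V_\text{bulk} \cap (-)$ over $\sd$; the paper simply reapplies the definition \eqref{eq:syndrome_definition} at the end without isolating this step, whereas you flag it as a separate ingredient---a harmless but helpful clarification.
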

\begin{proof}
By induction on $n$. Consider base case $n =2$:
\begin{equation}
\sigma(\mathbb E_1 \sd \mathbb E_2)
\overset{\eqref{eq:syndrome_definition}}=
V_\d \cap \bigsd_{e \in \mathbb E_1 \sd \mathbb E_2} e.
\end{equation}
Since $e \sd e \equiv\varnothing$ we have
$\bigsd_{e \in \mathbb E_1 \sd \mathbb E_2} e \equiv
(\bigsd_{e \in \mathbb E_1} e) \sd
(\bigsd_{e \in \mathbb E_2} e)$ so
\begin{equation}\label{eq:syndrome_of_2_errors}
\sigma(\mathbb E_1 \sd \mathbb E_2)
\overset{\eqref{eq:syndrome_definition}}=
\sigma(\mathbb E_1) \sd
\sigma(\mathbb E_2)
\end{equation}
so true for $n =2$.
Now for the inductive step;
assume true for $n =k$ and consider for $n =k+1$:
\begin{align}
	\bigsd_{i=1}^{k+1} \mathbb E_i
&=`\bigg(\bigsd_{i=1}^k \mathbb E_i) \sd \mathbb E_{k+1} \notag \\
\sigma`\bigg(\bigsd_{i=1}^{k+1} \mathbb E_i)
&\overset{\eqref{eq:syndrome_of_2_errors}}=
\sigma`\bigg(\bigsd_{i=1}^k \mathbb E_i) \sd
\sigma(\mathbb E_{k+1}) \notag \\
&\overset{\eqref{eq:syndrome_of_n_errors}}=
\bigsd_{i=1}^{k+1} \sigma(\mathbb E_i).
\end{align}
Hence, true for $n =k$ implies true for $n =k+1$.
Together, the base case and inductive step imply true for $n \ge 2$.
\end{proof}
\noindent
Now we can prove \cref{lem:leftover}.
\begin{proof}[Proof of \cref{lem:leftover}]
Consider the syndrome produced by the leftover:
\begin{align}
\sigma(\mathbb L)
&\overset{\eqref{eq:syndrome_of_2_errors}}=
\sigma(\mathbb E) \sd
\sigma(\mathbb C) \notag \\
&=\mathbb{S \sd S} \notag \\
&=\varnothing.
\end{align}
Hence, every detector is incident to an even number of edges in $\mathbb L$.
This can only occur if $\mathbb L$ comprises
cycles or paths whose endpoints are both boundary nodes.
Since each boundary node is incident to exactly one edge
and $\mathbb L$ comprises distinct edges,
these endpoints must be distinct.
\end{proof}

\subsection{Proof of \cref{lem:cluster_activity}}
\label{sec:proof_of_lem:cluster_activity}
We prove the following equivalent statement.
\begin{lem}
Cluster $C$ is inactive iff its defect count is even
or touches a boundary i.e.
\begin{equation}
`\big(\textnormal{$|\mathbb S \cap V_C|$ even})
\vee
`\big(V_\textnormal{boundary} \cap V_C \ne\varnothing).
\end{equation}
\end{lem}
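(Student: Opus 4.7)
The plan is to prove both directions of the biconditional, relying on two ingredients: the distributivity of $\sigma$ over $\sd$ established in Appendix~A, and the fact that each cluster $C$ is by definition a connected subgraph of $G$.

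For the forward direction I would argue contrapositively: assume $|\mathbb S \cap V_C|$ is odd and $V_\text{boundary} \cap V_C = \varnothing$, and show that no $\mathbb C \subseteq E_C$ can satisfy $\sigma(\mathbb C) = \mathbb S \cap V_C$. Under this hypothesis every edge in $E_C$ has both endpoints in $V_\text{bulk} \cap V_C$, so a standard handshake argument gives $\sum_{v \in V_C} \deg_{\mathbb C}(v) = 2|\mathbb C|$, which is even. Since $v \in \sigma(\mathbb C)$ iff $\deg_{\mathbb C}(v)$ is odd, $|\sigma(\mathbb C)|$ shares the parity of this degree sum and is therefore even, contradicting odd $|\mathbb S \cap V_C|$.

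For the reverse direction I would construct an explicit correction $\mathbb C \subseteq E_C$ with $\sigma(\mathbb C) = \mathbb S \cap V_C$. The key observation is that for any path $P \subseteq E$ with distinct endpoints $u, v$, each internal vertex appears in exactly two edges and so cancels under $\bigsd_{e \in P} e$, leaving $\sigma(P) = \{u, v\} \cap V_\text{bulk}$. If $|\mathbb S \cap V_C| = 2k$ is even, I label the defects $d_1, \ldots, d_{2k}$, use connectedness of $C$ to select a path $P_i \subseteq E_C$ from $d_{2i-1}$ to $d_{2i}$ for each $i$, and set $\mathbb C = \bigsd_{i=1}^k P_i$. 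If instead $|\mathbb S \cap V_C| = 2k-1$ is odd but some $b \in V_\text{boundary} \cap V_C$ exists, I pair the first $2k-2$ defects as above and take one additional path $P_k \subseteq E_C$ from $d_{2k-1}$ to $b$; since $b \notin V_\text{bulk}$, $\sigma(P_k) = \{d_{2k-1}\}$. The case of even defects when $C$ also touches a boundary is absorbed into the first subcase.

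The main subtlety to address is that the chosen paths may share edges, so that $\bigsd_i P_i$ is a strict subset of $\bigcup_i P_i$. This concern is neutralised by distributivity of $\sigma$ over $\sd$: regardless of edge overlap, $\sigma(\mathbb C) = \bigsd_{i=1}^k \sigma(P_i)$, and because the defects (together with $b$, where used) are pairwise distinct, their symmetric difference at the syndrome level is simply their union, which equals $\mathbb S \cap V_C$ as required.
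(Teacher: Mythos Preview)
Your proof is correct and follows essentially the same approach as the paper: a parity/handshake argument for the forward direction and a path-based construction combined with distributivity of $\sigma$ for the reverse. Your treatment is in fact more complete than the paper's, since you explicitly construct a correction in the odd-defects-with-boundary case (pairing the leftover defect with a boundary node $b$), whereas the paper's written proof handles only the even-defect case.
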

\begin{proof}~
\begin{itemize}
	\item[$\Rightarrow$]
	We know
	$\exists \mathbb C \subseteq E_C :\sigma(\mathbb C) =\mathbb S \cap V_C$.
	If $C$ touches\dots
	\begin{itemize}
		\item no boundary,
		$|\sigma(\mathbb C)|$ even $\forall \mathbb C \subseteq E_C$
		as adding any edge in $E_C$ to $\mathbb C$
		changes $|\sigma(\mathbb C)|$ by either 0 or 2.
		\item a boundary,
		$|\sigma(\mathbb C)|$ can be even or odd
		as adding an edge containing a boundary node to $\mathbb C$
		changes $|\sigma(\mathbb C)|$ by 1.
	\end{itemize}
	
	\item[$\Leftarrow$]
	If $|\mathbb S \cap V_C|$ even,
	partition $\mathbb S \cap V_C$ into $s/2$ pairs
	where $s :=|\mathbb S \cap V_C|$.
	$C$ connected so there exists a path $P_i \subseteq E_C$
	between the defects in each pair.
	Be $\mathbb C :=\bigsd_{i=1}^{s/2} P_i$
	the symmetric difference of these paths then
	\begin{equation}\label{eq:syndrome_of_sd_of_paths}
	\sigma(\mathbb C)
	\overset{\eqref{eq:syndrome_of_n_errors}}=
	\bigsd_{i=1}^{s/2} \sigma(P_i)
	\end{equation}
	where $\sigma(P_i)$ is precisely the $i^\th$ pair so
	\begin{align}
	\sigma(\mathbb C)
	&=\bigcup_{i=1}^{s/2} \sigma(P_i) \notag \\
	&=\mathbb S \cap V_C. \label{eq:inactive_def}
	\end{align}
	The existence of a correction satisfying \cref{eq:inactive_def}
	is the definition for $C$ to be inactive.

	If $|\mathbb S \cap V_C|$ odd,
	$C$ must touch a boundary
	so pick any boundary node $v \in V_C$.
	Partition $(\mathbb S \cap V_C) \cup `{v}$ into $s/2$ pairs
	where $s :=|(\mathbb S \cap V_C) \cup `{v}|$.
	Once again
	there exists a path $P_i$ between the nodes in each pair
	so construct $\mathbb C$ as before
	then \cref{eq:syndrome_of_sd_of_paths,eq:inactive_def} follow.
	\qedhere
\end{itemize}
\end{proof}

\section{Circuit-Level Noise}
\label{sec:circuit-level_noise_appendix}
Depolarising noise is emulated via the following error processes --
with probability:
\begin{itemize}
	\item $p_\M$, initialisation (measurement) prepares (records)
	the opposite state;
	\item $p_1$, gates $\hat\Ident, \hat H$ are followed by
	an error drawn randomly from
	$`{\hat X, \hat Y :=\hat X \hat Z, \hat Z}$;
	\item $p_2$, $\C(\hat X)$ is followed by
	an error drawn randomly from
	$`{\hat\Ident, \hat X, \hat Y, \hat Z}^{\otimes 2}
	\setminus `{\hat\Ident^{\otimes 2}}$.
\end{itemize}
In this paper
we call each of these processes a \emph{fault}.
We apply this emulation
to the syndrome extraction circuits in \cref{fig:5_step}
\begin{figure}
	\centering
	\begin{tabular}{cc}
		\begin{tikzpicture}
			\begin{yquant}
			qubit {\textsc{n}} d[1];
			qubit {\textsc{w}} d[+1];
			qubit {\textsc{e}} d[+1];
			qubit {\textsc{s}} d[+1];
			qubit {\textsc{a}} a;
			cnot a | d[0];
			cnot a | d[1];
			cnot a | d[2];
			cnot a | d[3];
			[type=qubit] measure {\textsc{z}} a;
			\end{yquant}
		\end{tikzpicture}
		&
		\begin{tikzpicture}
			\begin{yquant}
			qubit {\textsc{n}} d[1];
			qubit {\textsc{w}} d[+1];
			qubit {\textsc{e}} d[+1];
			qubit {\textsc{s}} d[+1];
			qubit {\textsc{a}} a;
			cnot d[0] | a;
			cnot d[1] | a;
			cnot d[2] | a;
			cnot d[3] | a;
			[type=qubit] measure {\textsc{x}} a;
			\end{yquant}
		\end{tikzpicture}
		\\
		(a) & (b)
	\end{tabular}
	\caption{The 5-step syndrome extraction circuit for
	(a) correcting bitflips;
	(b) correcting phaseflips.
	\textsc{a} is the ancilla qubit;
	\textsc{n, w, e, s} are the data qubits
	north,
	west,
	east,
	and south of \textsc{a} respectively.
	Each data qubit interacts with a different ancilla qubit
	every step for the first four steps
	but only one interaction is shown here.
	The initial state of \textsc{a} is
	the measurement outcome from the previous cycle
	i.e.\ $\ket{0}$ or $\ket{1}$ for (a),
	and $\ket{+}$ or $\ket{-}$ for (b).}
	\label{fig:5_step}
\end{figure}
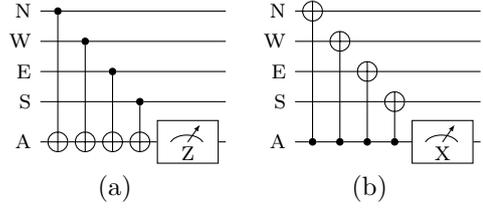
and choose the \emph{balanced parametrisation}:
\begin{equation}\label{eq:balanced_parametrisation}
(p_\M, p_1, p_2) =`\Big(\frac23 \frac45, \frac45, 1) p,
\end{equation}
explained in the next subsection.

\subsection{Parametrisations}\label{sec:parametrisations}
For ease of comparison,
one usually parametrises $(p_\M, p_1, p_2)$
in terms of some characteristic physical error probability $p$
such that $p_\M, p_1, p_2 =\mathcal O(p)$.
There are various such parametrisations
which depend on the hardware used for qubits \cite[p 1]{Wang2011}.
The simplest is the \emph{standard parametrisation}
defined by $p_\M =p_1 =p_2 =p$.
The balanced parametrisation \cref{eq:balanced_parametrisation}
is argued as follows:
\begin{itemize}
	\item
	A qubit involved in a two-qubit gate
	has a probability of error due to it of $12p/15$
	(as 12 of the 15 possible two-qubit gate faults
	result in an error on qubit 1;
	same goes for qubit 2 by symmetry).
	We set this equal to its probability of error
	due to a one-qubit circuit element
	(i.e.\ initialisation, measurement, $\hat\Ident$, or $\hat H$)
	\cite[p 39]{Knill2005}.
	\item Initialisation/measurement is in only one basis
	so is affected by only two errors from $`{\hat X, \hat Y, \hat Z}$.
\end{itemize}
We next describe how circuit-level noise abstracts
into the graph-theoretic approach,
and how to explicitly construct $G$
and the bitflip probability of an edge.

\subsection{Methodology}\label{sec:methodology}
For $G$ in this noise model,
nodes represent the same as they do
in the phenomenological noise model.
Edges however are different:
define bulk (boundary) edges as those
which are (are not) subsets of $V_\d$.
Then each bulk (boundary) edge corresponds to
a possible pair of defects (a possible defect)
that could have resulted from one fault.

To construct $G$
we follow the recipe described in Wang et al.~\cite{Wang2011}.
Namely,
we take the circuits in \cref{fig:5_step}
and tabulate the defect pairs
resulting from each fault for each gate/measurement,
as in \cite[Figure~3]{Wang2011}.
Then,
for each edge orientation,
we gather from the tabulation
all pairs matching that orientation,
as in \cite[Figure~4]{Wang2011}.

Each pair in the tabulation occurs with one of four probabilities from
$\v \pi :=(4p_2/15, 8p_2/15, 2p_1/3, p_\M)^\T$.
Thus for each orientation we define a multiplicity vector
$\v m \in \mathbb Z_{\ge 0}^4$
where $m_i$ is the number of
gathered pairs occurring with probability $\pi_i$.
The total number of pairs gathered by the orientation is $\sum_i m_i$.
The bitflip probability for an edge of this orientation
is the probability of an odd number of pairs occurring:
\begin{align}
\pr(\v m, \v \pi) &=
\sum_{i=1}^4 m_i \pi_i (1 -\pi_i)^{m_i -1}
\prod_{j \ne i} (1 -\pi_j)^{m_j} \notag\\
&+\mathcal O(p^3) \label{eq:bitflip_pr} \\
&\approx `\Big[\prod_{i=1}^4 (1 -\pi_i)^{m_i}]
\sum_{j=1}^4 \frac{m_j \pi_j}{1 -\pi_j}.
\end{align}

We employ special treatment for all four walls of the surface code,
where ancilla (data) qubits idle
instead of interact with a nonexistent data (ancilla) qubit.
Here,
edges are affected by
	more idling faults i.e. faults due to $\hat\Ident$,
	and fewer two-qubit-gate faults
than their counterparts in the bulk.
Multiplicity vectors thus depend not only on edge orientation
but also on their spatial (but not temporal) location in $G$.

\begin{table}
	\caption{Multiplicities for each edge orientation
	when correcting bitflips.
	Each orientation is labelled by
	one of its two corresponding \texttt{pointer} values.
	`On a wall' means the edge is not at a corner but either
		north- or southmost,
		or a boundary edge;
	see \cref{fig:circuit_level_unmerged}.}
	\centering
	\begin{tabular}{rl}
		\hline
		\hline
		Orientation	&	$\v m^\T$ \\
		\hline
		\texttt{S}	&	4210 \\
		\texttt{E}	&	$\begin{cases*}
			1020 &west boundary edge \\
			1010 &east boundary edge \\
			2010 &bulk edge \\
		\end{cases*}$ \\
		\texttt{U}	&	$\begin{cases*}
		3011 &north- or southmost \\
		4001 &else \\
		\end{cases*}$ \\
		\texttt{SD} or \texttt{SEU}	&	2000 \\
		\texttt{EU}	&	$\begin{cases*}
			2010 &at a west corner \\
			2020 &at an east corner \\
			3010 &on a wall \\
			4000 &else \\
		\end{cases*}$ \\
		\hline
		\hline
	\end{tabular}
	\label{tab:multiplicities_x}
\end{table}
\begin{figure}
	\centering
	\includegraphics
	[width=0.3\textwidth]
	{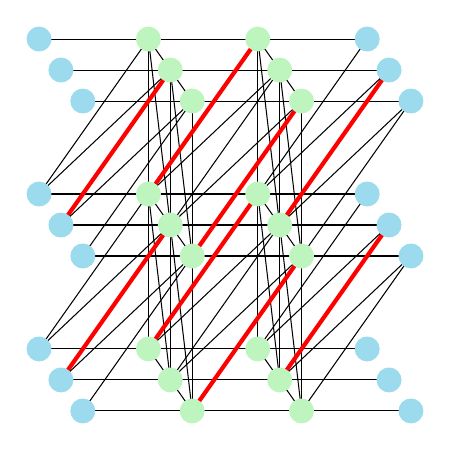}
	\caption{The graph $G_\r$ to correct bitflips
	under circuit-level noise
	with redundant diagonal boundary edges.
	The thick red edges are the set of \texttt{EU} edges `on a wall'.}
	\label{fig:circuit_level_unmerged}
\end{figure}
\Cref{tab:multiplicities_x} shows all multiplicity vectors
(which for brevity we write sans parentheses and commas)
for the graph in \cref{fig:circuit_level_unmerged}.
Note the edge orientations shown
are for correcting bitflips,
so are rotated about the $t$-axis by $\uppi/2$
with respect to those in \cite{Wang2011}
where the graph used corrects phaseflips,
and for which a similar multiplicity table can be computed.

Note also the graph in \cref{fig:circuit_level_unmerged}
has diagonal boundary edges.
This is because,
merely for bookkeeping,
we assign each fault resulting in one defect
the orientation of the resultant defect pair
had the fault occurred in the bulk of the code.
These edges are redundant in the sense that
every detector which needs to connect to the boundary
is already part of a unique horizontal boundary edge.
Mathematically,
an edge $e$ is redundant if there already exists another edge $f$
with the same syndrome:
$\sigma(`{e}) =\sigma(`{f})$.

Error sampling and UF
are unaffected by this redundancy,
but MWPM and weighted-edge variants of UF
only reach their full potential when there are no redundant edges.
Therefore, as a final step we merge all redundant edges in $G_\r$,
resulting in the graph $G$ in \cref{fig:SS_tree}(a).
To merge an edge $e$ with $f$
we delete $e$
and add the multiplicity vector of $e$ onto that of $f$.

\subsection{Other Syndrome Extraction Circuits}
\begin{figure*}
	\centering
	\includegraphics[width=\textwidth]{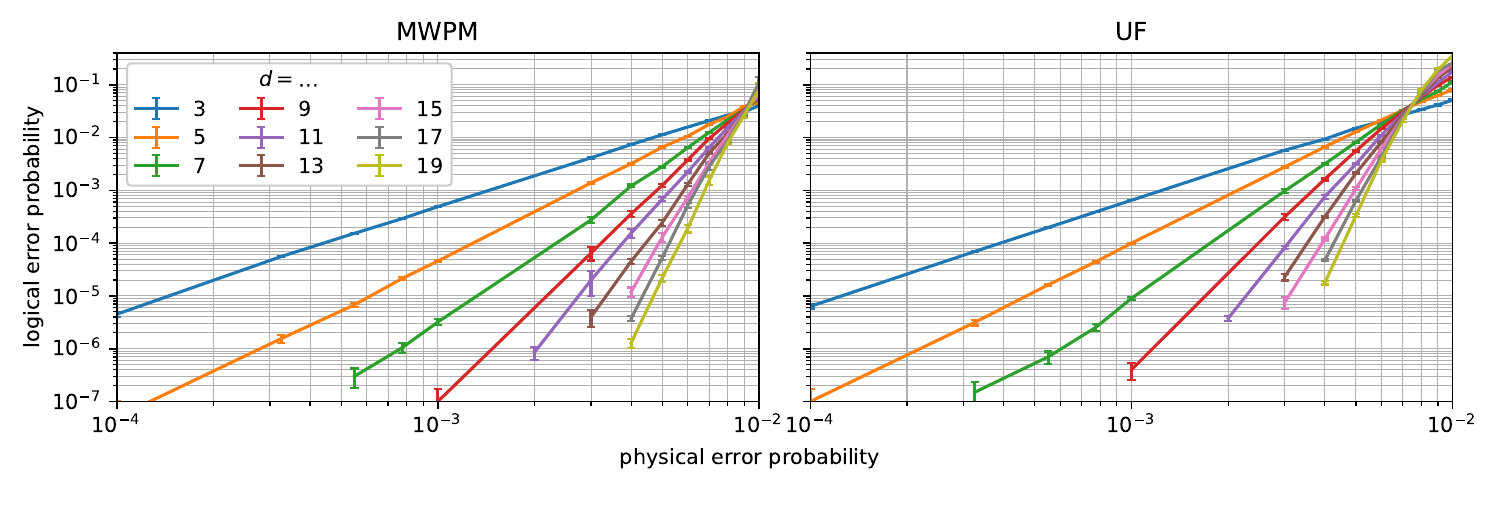}
	\caption{Threshold plots for MWPM and UF
	under the circuit-level noise model described in
	\cref{sec:circuit-level_noise_appendix}.
	Each datapoint is the mean of
	\numrange{1e2}{3e7} samples;
	errorbars show standard error.
	The threshold of MWPM is \num{\approx 9.2e-3}.}
	\label{fig:MWPM_v_UF}
\end{figure*}
The circuits in \cref{fig:5_step}
assume quantum nondemolition measurements
(so initialisation is not used)
which are native in both Z and X bases.
To be more pessimistic about quantum hardware
we can consider the following independent restrictions,
which both increase the step count of the circuit
and affect multiplicity vectors by
$\v m \gets \v m +\upDelta\v m$
where we specify $\upDelta\v m$ below.
\begin{itemize}
	\item Assuming demolition measurements
	separates measurement and initialisation.
	This allows both processes to err independently
	which increments $m_4$ for vertical edges.
	Also,
	all data qubits must idle for an extra step
	which increments $m_3$ for horizontal edges.
	The result is
	\begin{equation}
	\upDelta\v m^\T =\begin{cases*}
	0010 &\texttt{S} or \texttt{E} \\
	0001 &\texttt{U} \\
	0000 &else, \\
	\end{cases*}
	\end{equation}
	and a step count increase of 1.
	\item Restricting initialisation (if used) and measurement to,
	say the Z basis,
	means X-basis initialisation and measurement
	each require a Hadamard.
	This increases the step count of \emph{both} circuits
	(if they are performed with equal frequency)
	by 2,
	hence adds 2 to $m_3$ for horizontal and vertical edges:
	\begin{equation}\label{eq:Delta_m_monolingual}
	\upDelta\v m^\T =\begin{cases*}
	0020 &\texttt{S} or \texttt{E} or \texttt{U} \\
	0000 &else.
	\end{cases*}
	\end{equation}
\end{itemize}
\Cref{fig:8_step} shows syndrome extraction
under the combination of these two restrictions.
\begin{figure}
	\centering
	\begin{tabular}{cc}
		\begin{tikzpicture}
			\begin{yquant}
			qubit {\textsc{n}} d[1];
			qubit {\textsc{w}} d[+1];
			qubit {\textsc{e}} d[+1];
			qubit {\textsc{s}} d[+1];
			hspace {5mm} d;
			[after=d, inner sep=0mm]
			qubit {$\ket{0}$} a;
			cnot a | d[0];
			cnot a | d[1];
			cnot a | d[2];
			cnot a | d[3];
			dmeter {Z} a;
			discard a;
			\end{yquant}
		\end{tikzpicture}
		&
		\begin{tikzpicture}
			\begin{yquant}
			qubit {\textsc{n}} d[1];
			qubit {\textsc{w}} d[+1];
			qubit {\textsc{e}} d[+1];
			qubit {\textsc{s}} d[+1];
			hspace {5mm} d;
			[after=d, inner sep=0mm]
			qubit {$\ket{0}$} a;
			box {H} a;
			cnot d[0] | a;
			cnot d[1] | a;
			cnot d[2] | a;
			cnot d[3] | a;
			box {H} a;
			dmeter {Z} a;
			discard a;
			\end{yquant}
		\end{tikzpicture}
		\\
		(a) & (b)
	\end{tabular}
	\caption{The 8-step syndrome extraction circuit for
	(a) correcting bitflips;
	(b) correcting phaseflips.}
	\label{fig:8_step}
\end{figure}
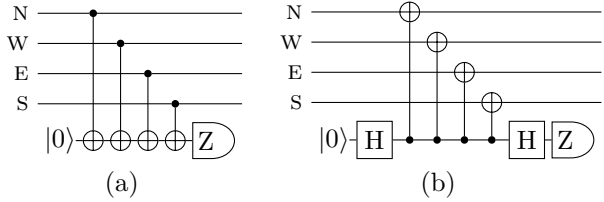
Note we have slightly optimised the circuit in (a)
by initialising and measuring as late and as early as possible,
respectively,
so the ancilla qubit never suffers from idling faults.
Thus for this circuit, \cref{eq:Delta_m_monolingual} is replaced by
\begin{equation}
	\upDelta\v m^\T =\begin{cases*}
0020 &\texttt{S} or \texttt{E} \\
0000 &else.
\end{cases*}
\end{equation}
For small $\v \pi$,
an increase in $\v m$ generally leads to
an increase in $\pr(\v m, \v \pi)$ from \cref{eq:bitflip_pr}
hence a decrease in decoder threshold.
Indeed,
for UF under the noise model
defined by the circuit in (b)
we numerically observe a threshold of \num{\approx 4.5e-3}
(cf.\ \cref{fig:thresholds}).
Generally in this paper,
we use the circuits in \cref{fig:5_step}.

\section{Comparison with MWPM}
\label{sec:comparison_with_mwpm}
\begin{rem*}
While both Helios and our versions of UF
grow clusters identically to original UF,
their \emph{final} outputs
given the same input $\mathbb S$ may not be identical,
as $\mathbb C$ depends on the choice of the root
and spanning tree of each cluster.

It is even possible that,
given an error $\mathbb E$ which generates $\mathbb S$,
one of the corrections will cause a logical error while another will not.
In other words,
the two corrections differ by a logical operation.
However,
such errors are those where the UF approach itself
cannot provide a preference between the two corrections --
the choice between them is arbitrary.
Indeed,
the above scenario necessarily implies
a cluster spans between opposite boundaries.
Since UF arbitrarily pairs defects within a cluster,
any logical difference between corrections
within this cluster is random.

One would therefore not expect these differences to lead to
any observable difference in decoder \emph{accuracy},
at least in terms of the subthreshold gradient.
This is indeed the case,
as \cref{fig:thresholds} shows.
\end{rem*}

With this remark,
we can unambiguously talk about the accuracy of UF
without specifying the particular implementation.
\Cref{fig:MWPM_v_UF} shows the accuracy of MWPM and UF
for practical noise levels and code distances.
We implement MWPM using Sparse Blossom \cite{Higgott2023}
and assign each edge a weight $\ln[(1-p_e)/p_e]$
where $p_e$ is its bitflip probability.
For both decoders below their respective thresholds,
we assume the logical error probability $f$ follows a power law:
\begin{equation}
\frac f{f_\th} =`\Big(\frac p{p_\th})^m
\end{equation}
where $(p, f)_\th$ are the coordinates of the threshold.
\cref{fig:gradients} shows both decoders have
the same subthreshold gradient $m(d)$,
suggesting their accuracy difference
can be characterised solely by their thresholds.
\begin{figure}
	\centering
	\includegraphics[width=0.5\textwidth]{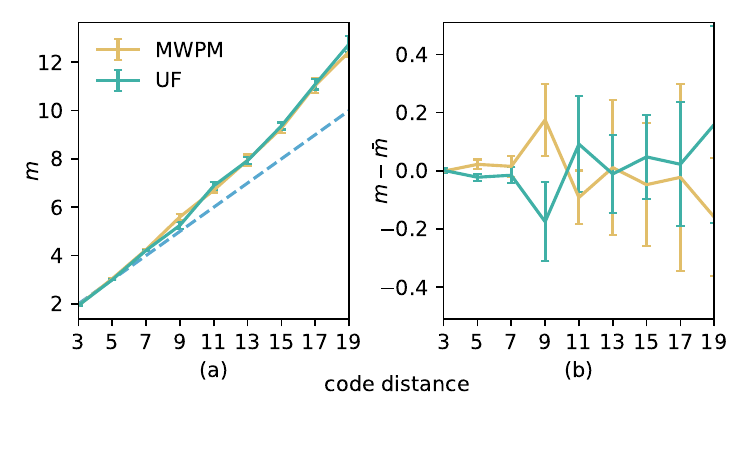}
	\caption{(a) Subthreshold gradient of the lines in
	\cref{fig:MWPM_v_UF}
	estimated using Weighted Least Squares.
	The dashed blue line shows the asymptotic ($p \to 0$) value $m =(d+1)/2$ from theory
	which underestimates the actual value at high distances --
	a feature already documented for MWPM \cite[\S 5]{Fowler2012b}.
	(b) Deviation of said gradient from the mean
	$\bar m :=\frac12(m_\text{MWPM} +m_\text{UF})$
	of both decoders.}
	\label{fig:gradients}
\end{figure}
These are around
	$(0.92, 3.8)\cdot\num{1e-2}$ for MWPM
	and $(0.75, 4.3)\cdot\num{1e-2}$ for UF,
so the former is more accurate.

Recent developments
have implemented MWPM with
	almost-linear mean runtime scaling \cite{Higgott2023}
	and parallelisation \cite{Wu2023}.
One may therefore question the relevance of UF
given it is less accurate.
However,
as UF approximates MWPM \cite{Wu2022},
any hardware capable of implementing MWPM
would likely be able to implement UF
and decode faster in the absolute sense.
Moreover,
and to the best of our knowledge,
no one has implemented MWPM in a strictly local fashion.
These reasons maintain UF as a competitive decoder,
especially in the near term.

\end{document}